\def\ps@pprintTitle{%
   \let\@oddhead\@empty
   \let\@evenhead\@empty
   \def\@oddfoot{\reset@font\hfil\thepage\hfil}
   \let\@evenfoot\@oddfoot
}
\newtheorem{theorem}{Theorem}
\newtheorem{lemma}[theorem]{Lemma}
\newtheorem{corollary}[theorem]{Corollary}
\newtheorem{example}[theorem]{Example}
\newcommand{\sdef}{{\mathrm{def}}}
\newcommand{\ord}{{\mathrm{ord}}}
\newcommand{\lcm}{{\mathrm{lcm}}}
\newcommand{\tr}{{\mathrm{Tr}}}
\newcommand{\gf}{{\mathrm{GF}}}
\newcommand{\support}{{\mathrm{suppt}}}
\newcommand{\PAut}{{\mathrm{PAut}}}
\newcommand{\MAut}{{\mathrm{MAut}}}
\newcommand{\GAut}{{\mathrm{Aut}}}
\newcommand{\Sym}{{\mathrm{Sym}}}
\newcommand{\wt}{{\mathtt{wt}}}
\newcommand{\Z}{\mathbb{{Z}}}
\newcommand{\m}{\mathbb{M}}
\newcommand{\cP}{{\mathcal{P}}}
\newcommand{\cB}{{\mathcal{B}}}
\newcommand{\C}{{\mathsf{C}}}
\newcommand{\M}{{\mathsf{M}}}
\newcommand{\bc}{{\mathbf{c}}}
\newcommand{\bzero}{{\mathbf{0}}}
\newcommand{\bD}{{\mathbb{D}}}
\newcommand{\PGL}{{\mathrm{PGL}}}
\newcommand{\PGaL}{{\mathrm{P \Gamma L}}}
\begin{document}
%\tableofcontents

\begin{frontmatter}

%% Title, authors and addresses

%% use the tnoteref command within \title for footnotes;
%% use the tnotetext command for the associated footnote;
%% use the fnref command within \author or \address for footnotes;
%% use the fntext command for the associated footnote;
%% use the corref command within \author for corresponding author footnotes;
%% use the cortext command for the associated footnote;
%% use the ead command for the email address,
%% and the form \ead[url] for the home page:
%%
%% \title{Title\tnoteref{label1}}
%% \tnotetext[label1]{}
%% \author{Name\corref{cor1}\fnref{label2}}
%% \ead{email address}
%% \ead[url]{home page}
%% \fntext[label2]{}
%% \cortext[cor1]{}
%% \address{Address\fnref{label3}}
%% \fntext[label3]{}

%\title{The linear codes of $2$-designs held in a class of ternary linear codes

\title{Infinite families of near MDS codes holding $t$-designs}

\tnotetext[fn1]{C. Ding's research was supported by the Hong Kong Research Grants Council,
Proj. No. 16300418. C. Tang was supported by The National Natural Science Foundation of China (Grant No.
11871058) and China West Normal University (14E013, CXTD2014-4 and the Meritocracy Research
Funds).}
%}

\author[cding]{Cunsheng Ding}
\ead{cding@ust.hk}
\author[cmt]{Chunming Tang}
\ead{tangchunmingmath@163.com}

%\cortext[lcj]{Corresponding author}
\address[cding]{Department of Computer Science and Engineering, The Hong Kong University of Science and Technology, Clear Water Bay, Kowloon, Hong Kong, China}
\address[cmt]{School of Mathematics and Information, China West Normal University, Nanchong, Sichuan,  637002, China}

%% use optional labels to link authors explicitly to addresses:
%% \author[label1,label2]{<author name>}
%% \address[label1]{<address>}
%% \address[label2]{<address>}
%\author{Cunsheng Ding}
%\ead{cding@ust.hk}

%\cortext[lcj]{Corresponding author}
%\address{Department of Computer Science and Engineering,
%The Hong Kong University of Science and Technology,
%Clear Water Bay, Kowloon, Hong Kong, China}

%\tableofcontents

\begin{abstract} 
An $[n, k, n-k+1]$ linear code is called an MDS code. 
An $[n, k, n-k]$ linear code is said to be almost maximum distance separable (almost MDS or AMDS for short). 
A code is said to be near  maximum distance separable (near MDS or NMDS for short) if the code and its dual code 
both are almost maximum distance separable. The first near MDS code was the $[11, 6, 5]$ ternary Golay code 
discovered in 1949 by Golay. This ternary code holds $4$-designs, and its extended code holds a Steiner system 
$S(5, 6, 12)$ with the largest strength known. In the past 70 years, sporadic near MDS codes 
holding $t$-designs were discovered and many infinite families of near MDS codes over finite fields were 
constructed. However, the question as to whether there is an infinite family of near MDS codes holding an infinite 
family of $t$-designs for $t\geq 2$ remains open for 70 years. This paper settles this long-standing problem by 
presenting an infinite family of near MDS codes over $\gf(3^s)$ holding an infinite family of $3$-designs and an infinite family of near MDS  codes over $\gf(2^{2s})$ holding an infinite family of $2$-designs. The subfield subcodes of these two families 
are also studied, and are shown to be dimension-optimal or distance-optimal.       
\end{abstract}

\begin{keyword}
Cyclic code \sep linear code \sep near MDS code \sep $t$-design.
%% PACS codes here, in the form: \PACS code \sep code

%% MSC codes here, in the form: \MSC code \sep code
%% or \MSC[2008] code \sep code (2000 is the default)
\MSC  05B05 \sep 51E10 \sep 94B15

\end{keyword}

\end{frontmatter}

%\tableofcontents

\section{Introduction}

Let $\cP$ be a set of $v \ge 1$ elements, and let $\cB$ be a set of $k$-subsets of $\cP$, where $k$ is
a positive integer with $1 \leq k \leq v$. Let $t$ be a positive integer with $t \leq k$. The incidence structure 
$\bD = (\cP, \cB)$ is called a $t$-$(v, k, \lambda)$ {\em design\index{design}}, or simply {\em $t$-design\index{$t$-design}}, if every $t$-subset of $\cP$ is contained in exactly $\lambda$ elements of
$\cB$. The elements of $\cP$ are called points, and those of $\cB$ are referred to as blocks.
We usually use $b$ to denote the number of blocks in $\cB$. 
Let $\binom{\cP}{k}$ denote the set of all $k$-subsets of $\cP$. Then $(\cP, \binom{\cP}{k})$ is a $k$-$(v, k, 1)$ design, 
which is called a \emph{complete design}. 
  A $t$-design is called {\em simple\index{simple}} if $\cB$ does not contain
any repeated blocks.
In this paper, we consider only simple $t$-designs with $v > k > t$.
A $t$-$(v,k,\lambda)$ design is referred to as a
{\em Steiner system\index{Steiner system}} if $t \geq 2$ and $\lambda=1$,
and is denoted by $S(t,k, v)$. 

A $t$-$(v, k, \lambda)$ design is also a $s$-$(v, k, \lambda_s)$ design with 
\begin{eqnarray}\label{eqn-lambdas}
\lambda_s=\lambda \binom{v-s}{t-s}/\binom{k-s}{t-s}
\end{eqnarray} 
for all $s$ with $0 \leq s \leq t$.

Let $\C$ be a $[v, \kappa, d]$ linear code over $\gf(q)$. Let $A_i$ denote the
number of codewords with Hamming weight $i$ in $\C$, where $0 \leq i \leq v$. The sequence
$(A_0, A_1, \cdots, A_{v})$ is
called the \textit{weight distribution} of $\C$, and $\sum_{i=0}^v A_iz^i$ is referred to as
the \textit{weight enumerator} of $\C$. In this paper, $\C^\perp$ denotes the dual code 
of $\C$, $d^\perp$ denotes the minimum distance of $\C^\perp$, and $(A_0^\perp, A_1^\perp, \cdots, A_{v}^\perp)$ 
denotes the weight distribution of $\C^\perp$.  

A  $[v, \kappa, d]$ linear code over $\gf(q)$ is said to be distance-optimal if there is no $[v, \kappa, d']$ over 
$\gf(q)$ with $d'>d$. A  $[v, \kappa, d]$ linear code over $\gf(q)$ is said to be dimension-optimal if there is no 
$[v, \kappa', d]$ over $\gf(q)$ with $\kappa' > \kappa$.  A  $[v, \kappa, d]$ linear code over $\gf(q)$ is said 
to be length-optimal if there is no $[v', \kappa, d]$ over $\gf(q)$ with $v'<v$. A linear code is said to be optimal 
if it is distance-optimal, dimension-optimal and length-optimal.  

A coding-theoretic construction of $t$-designs is the following. 
For each $k$ with $A_k \neq 0$,  let $\cB_k$ denote
the set of the supports of all codewords with Hamming weight $k$ in $\C$, where the coordinates of a codeword
are indexed by $(p_1, \ldots, p_v)$. Let $\cP=\{p_1, \ldots, p_v\}$.  The pair $(\cP, \cB_k)$
may be a $t$-$(v, k, \lambda)$ design for some positive integer $\lambda$, which is called a
\emph{support design} of the code, and is denoted by $\bD_k(\C)$. In such a case, we say that the code $\C$ holds a $t$-$(v, k, \lambda)$
design or the codewords of weight $k$ in $\C$ support a $t$-$(v, k, \lambda)$
design. 

The following theorem, developed by Assumus and Mattson, shows that the pair $(\cP, \cB_k)$ defined by 
a linear code is a $t$-design under certain conditions \cite{AM69}.

\begin{theorem}[Assmus-Mattson Theorem]\label{thm-designAMtheorem}
Let $\C$ be a $[v,k,d]$ code over $\gf(q)$. Let $d^\perp$ denote the minimum distance of $\C^\perp$. 
Let $w$ be the largest integer satisfying $w \leq v$ and 
$$ 
w-\left\lfloor  \frac{w+q-2}{q-1} \right\rfloor <d. 
$$ 
Define $w^\perp$ analogously using $d^\perp$. Let $(A_i)_{i=0}^v$ and $(A_i^\perp)_{i=0}^v$ denote 
the weight distribution of $\C$ and $\C^\perp$, respectively. Fix a positive integer $t$ with $t<d$, and 
let $s$ be the number of $i$ with $A_i^\perp \neq 0$ for $0 \leq i \leq v-t$. Suppose $s \leq d-t$. Then 
\begin{itemize}
\item the codewords of weight $i$ in $\C$ hold a $t$-design provided $A_i \neq 0$ and $d \leq i \leq w$, and 
\item the codewords of weight $i$ in $\C^\perp$ hold a $t$-design provided $A_i^\perp \neq 0$ and 
         $d^\perp \leq i \leq \min\{v-t, w^\perp\}$. 
\end{itemize}
\end{theorem}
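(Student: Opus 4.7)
The plan is to show, for each weight $i$ in the allowed range, that the number of codewords of $\C$ of weight $i$ whose support contains a prescribed $t$-subset $T$ depends only on $t$ and $i$, not on $T$. By inclusion-exclusion, this is equivalent to showing that $A_i(T) := |\{c \in \C : \wt(c) = i \text{ and } c_j = 0 \text{ for all } j \in T\}|$ is independent of $T$, because the double-count identity $\sum_{|T|=t} A_i(T) = \binom{v-i}{t} A_i$ (each codeword of weight $i$ is counted by the $\binom{v-i}{t}$ many $t$-subsets contained in its $(v-i)$-element zero set) then forces the common value to equal $\binom{v-i}{t} A_i / \binom{v}{t}$.

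Fix a $t$-subset $T$ of $\cP$ and consider the shortened code $\C(T) := \{c \in \C : c_j = 0 \text{ for } j \in T\}$, viewed as a length-$(v-t)$ code. Since weights transfer from $\C$, its minimum distance is at least $d$, so $A_0(T) = 1$ and $A_i(T) = 0$ for $1 \leq i \leq d-1$. Its dual is the punctured code $\C^\perp|_{\cP \setminus T}$, and each weight-$j$ codeword of $\C^\perp$ punctures to a word of weight $j-k$ with $0 \leq k \leq t$. The MacWilliams identity then expresses the weight enumerator of $\C(T)$ in terms of that of $\C^\perp|_{\cP \setminus T}$, which is in turn controlled by the at most $s$ values of $j \in [0,v-t]$ with $A_j^\perp \neq 0$ guaranteed by hypothesis.

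Combining these ingredients produces a linear system for the unknowns $A_i(T)$ with $d \leq i \leq v-t$: the $d$ vanishing conditions on $W_{\C(T)}$, together with the sparsity constraint on the punctured dual weight distribution, leave few free parameters, and the hypothesis $s \leq d-t$ is precisely what makes the resulting system nonsingular on the range $d \leq i \leq w$, with $w$ the combinatorial cutoff defined in the statement. Once $A_i(T)$ is shown to be $T$-independent for such $i$, the $t$-design property for the weight-$i$ supports of $\C$ follows from the double-count identity above. The analogous conclusion for codewords of $\C^\perp$ of weight $i \in [d^\perp, \min\{v-t, w^\perp\}]$ is obtained by interchanging the roles of $\C$ and $\C^\perp$ throughout.

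The main obstacle will be carrying out the MacWilliams-based linear-algebra step rigorously, in particular verifying that the apparently technical definition of $w$ is precisely the threshold at which the MacWilliams recursion still forces each $A_i(T)$ to admit a $T$-independent closed form in $v$, $q$, $t$, and the $A_j^\perp$, and that beyond $w$ additional degrees of freedom appear and the argument breaks down. This is a delicate coefficient-chasing argument but purely linear-algebraic once the setup above is in place.
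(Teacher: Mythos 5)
You should note first that the paper does not prove Theorem~\ref{thm-designAMtheorem}: it quotes it from Assmus and Mattson \cite{AM69}, so your proposal has to be measured against the standard proof, and measured that way its central step fails. You shorten $\C$ on $T$ and invoke MacWilliams against its dual, the punctured code $\C^\perp|_{\cP\setminus T}$, claiming its weight enumerator is ``controlled by the at most $s$ values of $j\in[0,v-t]$ with $A_j^\perp\neq 0$.'' That is not true: every codeword of $\C^\perp$ survives puncturing, including those of weight larger than $v-t$, and a codeword of weight $j$ punctures to weight $j-\ell$, where $\ell=|\support(\bc)\cap T|$ is exactly the intersection datum you are trying to control; so the punctured dual has neither few nonzero weights nor a $T$-independent list of candidate weights, and the proposed linear system is circular rather than nonsingular. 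The pairing that works is the opposite one: shorten $\C^\perp$ on $T$ and puncture $\C$ on $T$. Since shortening produces no weight shift, the shortened code $(\C^\perp)_T$ has all its nonzero weights among the at most $s$ weights of $\C^\perp$ lying in $[1,v-t]$, while its dual, the punctured code $\C^T$, has dimension $k$ (because $t<d$) and minimum distance at least $d-t\geq s$; hence the first $s$ Pless power moments of $(\C^\perp)_T$ are determined by $v-t$, $k$ and $q$ alone and pin down the at most $s$ unknown frequencies through a Vandermonde-type system, uniformly in $T$. This is the only place $s\leq d-t$ enters. It yields the conclusion for $\C^\perp$ (weights up to $\min\{v-t,w^\perp\}$, modulo the point below), and the conclusion for $\C$ then requires a further induction on $i$ from $d$ up to $w$, comparing the now $T$-independent weight distribution of $\C^T$ with the derived counts of the designs already obtained at smaller weights; it is not obtained, as you suggest at the end, by interchanging the roles of $\C$ and $\C^\perp$, since the hypothesis $s\leq d-t$ is not symmetric in the two codes.

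Second, you have misassigned the role of $w$. It is not a threshold for solvability of the moment/MacWilliams system (that system determines the entire weight distribution of the shortened dual, with no reference to $w$). The condition $w-\lfloor (w+q-2)/(q-1)\rfloor<d$ guarantees that two codewords of weight $i\leq w$ with the same support are proportional: among the $i$ common nonzero coordinates some ratio of the two codewords occurs at least $\lceil i/(q-1)\rceil$ times, so otherwise a suitable $\gf(q)$-linear combination would be a nonzero codeword of weight at most $i-\lceil i/(q-1)\rceil<d$. This is what converts codeword counts into block counts (each support of weight $i\leq w$, respectively $i\leq w^\perp$ in the dual, carries exactly $q-1$ codewords); without it, $T$-independence of your $A_i(T)$, which counts codewords, does not yield a $t$-design, which is a statement about supports. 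Your opening reduction (containment counts versus avoidance counts, via averaging to smaller index sets and M\"obius inversion) is sound, but the two gaps above, the reversed duality pairing and the wrong reading of $w$ and $w^\perp$, are precisely where the ``delicate coefficient-chasing'' you defer would break down.
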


The Assmus-Mattson Theorem is a very useful tool in constructing $t$-designs from linear codes 
(see, for example, \cite{DingLi16}, \cite{Dingbook18}, \cite{Tonchev} and \cite{Tonchevhb}). 
Another sufficient condition for the 
incidence structure $(\cP, \cB_k)$ to be a $t$-design is via the automorphism group of the 
code $\C$.

The set of coordinate permutations that map a code $\C$ to itself forms a group, which is referred to as
the \emph{permutation automorphism group\index{permutation automorphism group of codes}} of $\C$
and denoted by $\PAut(\C)$. If $\C$ is a code of length $n$, then $\PAut(\C)$ is a subgroup of the
\emph{symmetric group\index{symmetric group}} $\Sym_n$.

A \emph{monomial matrix\index{monomial matrix}} over $\gf(q)$ is a square matrix having exactly one
nonzero element of $\gf(q)$  in each row and column. A monomial matrix $M$ can be written either in
the form $DP$ or the form $PD_1$, where $D$ and $D_1$ are diagonal matrices and $P$ is a permutation
matrix.

The set of monomial matrices that map $\C$ to itself forms the group $\MAut(\C)$,  which is called the
\emph{monomial automorphism group\index{monomial automorphism group}} of $\C$. Clearly, we have
$$
\PAut(\C) \subseteq \MAut(\C).
$$

The \textit{automorphism group}\index{automorphism group} of $\C$, denoted by $\GAut(\C)$, is the set
of maps of the form $M\gamma$,
where $M$ is a monomial matrix and $\gamma$ is a field automorphism, that map $\C$ to itself. In the binary
case, $\PAut(\C)$,  $\MAut(\C)$ and $\GAut(\C)$ are the same. If $q$ is a prime, $\MAut(\C)$ and
$\GAut(\C)$ are identical. In general, we have
$$
\PAut(\C) \subseteq \MAut(\C) \subseteq \GAut(\C).
$$

By definition, every element in $\GAut(\C)$ is of the form $DP\gamma$, where $D$ is a diagonal matrix,
$P$ is a permutation matrix, and $\gamma$ is an automorphism of $\gf(q)$.
The automorphism group $\GAut(\C)$ is said to be $t$-transitive if for every pair of $t$-element ordered
sets of coordinates, there is an element $DP\gamma$ of the automorphism group $\GAut(\C)$ such that its
permutation part $P$ sends the first set to the second set. The automorphism group $\GAut(\C)$ is said to be $t$-homogeneous if for every pair of $t$-element 
sets of coordinates, there is an element $DP\gamma$ of the automorphism group $\GAut(\C)$ such that its
permutation part $P$ sends the first set to the second set.

The next theorem gives a 
sufficient condition for a linear code to hold $t$-designs \cite[p. 308]{HP03}.

\begin{theorem}\label{thm-designCodeAutm}
Let $\C$ be a linear code of length $n$ over $\gf(q)$ where $\GAut(\C)$ is $t$-transitive 
or $t$-homogeneous. Then the codewords of any weight $i \geq t$ of $\C$ hold a $t$-design.
\end{theorem}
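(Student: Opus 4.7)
The plan is to exploit the given symmetry of $\GAut(\C)$ to show that the quantity $\lambda(T) := |\{B \in \cB_i : T \subseteq B\}|$ depends only on the weight $i$ and not on the specific $t$-subset $T \subseteq \cP$, which is exactly the $t$-design property.

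First I would fix a weight $i \geq t$ with $A_i \neq 0$ (otherwise there is nothing to prove) and take an arbitrary $g = DP\gamma \in \GAut(\C)$ whose monomial part $DP$ induces, via $P$, a permutation $\sigma$ of $\cP$. The key observation is that
$$\support(g(\bc)) = \sigma(\support(\bc))$$
for every codeword $\bc \in \C$, since the diagonal matrix $D$ rescales coordinates by nonzero scalars and the field automorphism $\gamma$ fixes $0$ and sends nonzero elements to nonzero elements, so only the permutation $P$ can move the positions of the nonzero entries. In particular, $g$ permutes the weight-$i$ codewords among themselves, and $\sigma$ therefore acts as a bijection on the set $\cB_i$ of their supports, with inverse induced by $g^{-1}$.

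Second, from $T \subseteq B \iff \sigma(T) \subseteq \sigma(B)$ and the fact that $B \mapsto \sigma(B)$ is a bijection of $\cB_i$, I obtain $\lambda(T) = \lambda(\sigma(T))$ for every $\sigma$ that arises as the permutation part of an element of $\GAut(\C)$. If $\GAut(\C)$ is $t$-homogeneous, then for any two $t$-subsets $T_1, T_2 \subseteq \cP$ some such $\sigma$ satisfies $\sigma(T_1) = T_2$, whence $\lambda(T_1) = \lambda(T_2)$. Since $t$-transitivity trivially implies $t$-homogeneity, the same conclusion follows under either hypothesis, and $(\cP, \cB_i)$ is a $t$-$(n, i, \lambda)$ design with common block-intersection number $\lambda = \lambda(T)$.

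The only delicate point is verifying that the diagonal and Galois components of a general element of $\GAut(\C)$ do not disturb the support beyond the action of the permutation part; once this is noted, the $t$-design property follows by a one-line orbit argument and requires no information about the weight distribution of $\C$.
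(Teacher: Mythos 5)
Your proof is correct and is essentially the standard argument: the paper itself does not prove this theorem but cites it from Huffman--Pless \cite[p.~308]{HP03}, where the proof is exactly this orbit argument (the permutation part of each automorphism preserves weights and maps supports to supports, so $t$-homogeneity forces the number of weight-$i$ supports containing a $t$-set to be independent of the $t$-set). Your handling of the diagonal and Galois parts, and the reduction of $t$-transitivity to $t$-homogeneity, are exactly the points the cited proof relies on, so there is nothing to add.
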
 

So far many infinite families of $3$-designs have been constructed from this coding-theoretic 
approach. However, no infinite family of $4$-designs has been produced with this approach, 
though sporadic $4$-designs and $5$-designs have been obtained from sporadic linear codes. 

An $[n, k, n-k+1]$ linear code is called an MDS code. MDS codes do hold $t$-designs with very large $t$. 
Unfortunately, all $t$-designs held in MDS codes are complete ans thus trivial. One would naturally ask if some 
$[n, k, n-k]$ codes over finite fields hold simple $t$-designs. 

An $[n, k, n-k]$ linear code is said to be almost maximum distance separable (almost MDS or AMDS for short). 
A code is said to be near  maximum distance separable (near MDS or NMDS for short) if the code and its dual code 
both are almost maximum distance separable. The first near MDS code was the $[11, 6, 5]$ ternary Golay code 
discovered in 1949 by Golay \cite{Golay49}. This ternary code holds $4$-designs, and its extended code holds a Steiner system 
$S(5, 6, 12)$ having the largest strength known. In the past 70 years, sporadic near MDS codes 
holding $t$-designs were discovered and many infinite families of near MDS codes over finite fields were 
constructed. However, the question as to whether there is an infinite family of near MDS codes holding an infinite 
family of $t$-designs for $t\geq 2$ remains open for 70 years. This paper settles this long-standing problem by 
presenting an infinite family of near MDS codes over $\gf(3^s)$ holding an infinite family of $3$-designs and an infinite family of near MDS  codes over $\gf(2^{2s})$ holding an infinite family of $2$-designs.  The subfield subcodes of the two families of 
near MDS codes are also studied, and shown to be dimension-optimal or distance-optimal.

\section{Cyclic codes and BCH codes} 

An $[n,k, d]$ code $\C$ over $\gf(q)$ is {\em cyclic} if 
$(c_0,c_1, \cdots, c_{n-1}) \in \C$ implies $(c_{n-1}, c_0, c_1, \cdots, c_{n-2}) 
\in \C$.  
By identifying any vector $(c_0,c_1, \cdots, c_{n-1}) \in \gf(q)^n$ 
with  
$$ 
c_0+c_1x+c_2x^2+ \cdots + c_{n-1}x^{n-1} \in \gf(q)[x]/(x^n-1), 
$$
any code $\C$ of length $n$ over $\gf(q)$ corresponds to a subset of the quotient ring 
$\gf(q)[x]/(x^n-1)$. 
A linear code $\C$ is cyclic if and only if the corresponding subset in $\gf(q)[x]/(x^n-1)$ 
is an ideal of the ring $\gf(q)[x]/(x^n-1)$. 

Note that every ideal of $\gf(q)[x]/(x^n-1)$ is principal. Let $\C=\langle g(x) \rangle$ be a 
cyclic code, where $g(x)$ is monic and has the smallest degree among all the 
generators of $\C$. Then $g(x)$ is unique and called the {\em generator polynomial,} 
and $h(x)=(x^n-1)/g(x)$ is referred to as the {\em parity-check} polynomial of $\C$. 

Let $n$ be a positive integer and 
let $\Z_n$ denote  the set $\{0,1,2, \cdots, n-1\}$.  Let $s$ be an integer with $0 \leq s <n$. The \emph{$q$-cyclotomic coset of $s$ modulo $n$\index{$q$-cyclotomic coset modulo $n$}} is defined by 
$$ 
C_s=\{s, sq, sq^2, \cdots, sq^{\ell_s-1}\} \bmod n \subseteq \Z_n,  
$$
where $\ell_s$ is the smallest positive integer such that $s \equiv s q^{\ell_s} \pmod{n}$, and is the size of the 
$q$-cyclotomic coset. The smallest integer in $C_s$ is called the \emph{coset leader\index{coset leader}} of $C_s$. 
Let $\Gamma_{(n,q)}$ be the set of all the coset leaders. We have then $C_s \cap C_t = \emptyset$ for any two 
distinct elements $s$ and $t$ in  $\Gamma_{(n,q)}$, and  
\begin{eqnarray}\label{eqn-cosetPP}
\bigcup_{s \in  \Gamma_{(n,q)} } C_s = \Z_n. 
\end{eqnarray}
Hence, the distinct $q$-cyclotomic cosets modulo $n$ partition $\Z_n$. 

Let $m=\ord_{n}(q)$ be the order of $q$ modulo $n$, and let $\alpha$ be a generator of $\gf(q^m)^*$. Put $\beta=\alpha^{(q^m-1)/n}$. 
Then $\beta$ is a primitive $n$-th root of unity in $\gf(q^m)$. The minimal polynomial $\m_{\beta^s}(x)$ 
of $\beta^s$ over $\gf(q)$ is the monic polynomial of the smallest degree over $\gf(q)$ with $\beta^s$ 
as a root.  It is straightforward to see that this polynomial is given by 
\begin{eqnarray}
\m_{\beta^s}(x)=\prod_{i \in C_s} (x-\beta^i) \in \gf(q)[x], 
\end{eqnarray} 
which is irreducible over $\gf(q)$. It then follows from (\ref{eqn-cosetPP}) that 
\begin{eqnarray}\label{eqn-canonicalfact}
x^n-1=\prod_{s \in  \Gamma_{(n,q)}} \m_{\beta^s}(x)
\end{eqnarray}
which is the factorization of $x^n-1$ into irreducible factors over $\gf(q)$. This canonical factorization of $x^n-1$ 
over $\gf(q)$ is crucial for the study of cyclic codes.

Let $\delta$ be an integer with $2 \leq \delta \leq n$ and let $h$ be an integer.  
A \emph{BCH code\index{BCH codes}} over $\gf(q)$ 
with length $n$ and \emph{designed distance} $\delta$, denoted by $\C_{(q,n,\delta,h)}$, is a cyclic code with 
generator polynomial 
\begin{eqnarray}\label{eqn-BCHdefiningSet}
g_{(q,n,\delta,h)}=\lcm(\m_{\beta^h}(x), \m_{\beta^{h+1}}(x), \cdots, \m_{\beta^{h+\delta-2}}(x)) 
\end{eqnarray}
where the least common multiple is computed over $\gf(q)$.

It may happen that $\C_{(q,n,\delta_1,h)}$ and $\C_{(q,n,\delta_2,h)}$ are identical for two distinct 
$\delta_1$ and $\delta_2$. The 
maximum designed distance of a BCH code is also called the \emph{Bose distance\index{Bose distance}}. 

When $h=1$, the code $\C_{(q,n,\delta,h)}$ with the generator polynomial in (\ref{eqn-BCHdefiningSet}) is called a \emph{narrow-sense\index{narrow sense}} BCH code. If $n=q^m-1$, then $\C_{(q,n,\delta,h)}$ is referred to as a \emph{primitive\index{primitive BCH}} BCH code. 

BCH codes are a subclass of cyclic codes with interesting properties. In many cases BCH codes are the best linear codes. 
For example, among all binary cyclic codes of odd length $n$ with $n \leq 125$ the best cyclic code is always a BCH code 
except for two special cases \cite{Dingbook15}. Reed-Solomon codes are also BCH codes and are widely used in communication 
devices and consumer electronics. In the past ten years, a lot of progress on the study of BCH codes has been made 
(see, for example, \cite{LWL19,LiSIAM,LLFLR,SYW,YLLY}).  

It is well known that the extended code $\overline{\C_{(q,q^m-1,\delta,1)}}$ of the narrow-sense primitive BCH code 
$\C_{(q,q^m-1,\delta,1)}$ holds $2$-designs, as the permutation automorphism group of the extended code contains 
the general affine group as a subgroup (see, for example, 
\cite{DingZhouConf17} and \cite[Chapter 8]{Dingbook18}). However, It is extremely rare that an infinite family of 
cyclic codes hold an infinite family of $3$-designs. In this paper, we will present an infinite family of BCH codes 
holding an infinite family of $3$-designs.

\section{Almost MDS codes and near MDS codes} 

The \emph{Singleton defect}\index{Singleton defect} of an $[n, k, d]$ code $\C$ is defined 
by $\sdef(\C)=n-k+1-d$. Thus, MDS codes are codes with defect $0$. A code $\C$ is said to be 
\emph{almost MDS} (AMDS for short) if it has defect $1$. Hence, AMDS codes have parameters 
$[n, k, n-k]$. AMDS codes of dimension $1$, $n-2$, $n-1$ and $n$ are called \emph{trivial}. 
Since it is easy to construct trivial AMDS codes of arbitrary lengths, we will consider only 
nontrivial AMDS codes. 

The following theorem summarises some basic properties of AMDS codes 
(see  \cite{DodLan95} and \cite{FaldumWillems97} for a proof). 

\begin{theorem}\label{thm-aug291}
Let $\C$ be an $[n, k, n-k]$ AMDS code over $\gf(q)$. 
\begin{itemize} 
\item If $k \geq 2$, then $n \leq k + 2q$. 
\item If $k \geq 2$ and $n-k > q$, then $k \leq 2q$. 
\item If $n-k>q$, then $\C^\perp$ is also AMDS.  
\item If $k \geq 2$, then $\C$ is generated by its codewords of weight $n-k$ and $n-k+1$. 
\item If $k \geq 2$ and $n-k >q$, then $\C$ is generated by its minimum weight codewords. 
\end{itemize} 
\end{theorem}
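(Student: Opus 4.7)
Plan: this theorem collects five structural facts about AMDS codes, most of which are proved in Dodunekov--Landjev and Faldum--Willems. I would handle the bullets in the order stated, since the later items lean on the earlier ones. Throughout, let $H$ be an $(n-k)\times n$ parity-check matrix of $\C$ and let $G$ be a $k\times n$ generator matrix.

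For the length bound $n\le k+2q$, I would pass to the projective-geometry description of the columns of $H$. Since $d=n-k$, any $n-k-1$ columns of $H$ are linearly independent, so the columns give $n$ points in $\PG(n-k-1,q)$ with the property that any $n-k-1$ are in general position. This is an \emph{$(n,3)$-arc}-type condition after a standard reduction, and classical arc bounds over $\gf(q)$ yield $n\le k+2q$. For the second bullet, I would add the hypothesis $n-k>q$ and argue by contradiction: assuming $k>2q$, shorten $\C$ on a suitably chosen set of coordinates, to obtain an AMDS code whose dual violates the bound just proved in the first bullet; the size condition $n-k>q$ is exactly what allows the shortened dual to stay within the AMDS regime.

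For the duality statement, I would show that $d^\perp\ge k$ always by computing the first few Pless power moments from the weight enumerator of $\C$, where knowing $A_i=0$ for $1\le i\le n-k-1$ and $A_{n-k},A_{n-k+1}$ determined by MDS-defect arithmetic forces $A_i^\perp=0$ for $1\le i\le k-1$. The only other possibility, $d^\perp=k+1$, would make $\C^\perp$ an $[n,n-k,k+1]$ MDS code; under $n-k>q$ this conflicts with the MDS bound after shortening, so $d^\perp=k$ and $\C^\perp$ is AMDS. For the fourth bullet, let $\C'\subseteq\C$ be the subcode spanned by codewords of weight $n-k$ and $n-k+1$. If $\C'\subsetneq\C$, pick $\bc\in\C\setminus\C'$ of minimal weight $w\ge n-k+2$, and consider the residual code obtained by puncturing on the support of $\bc$; a short dimension/weight computation, combined with $k\ge 2$, produces an extra low-weight codeword in $\C$ not in $\C'$, contradicting the choice of $\bc$. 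The fifth bullet is a refinement: under $n-k>q$, the dual is AMDS by bullet three, and the same residual argument applied with $d^\perp=k$ shows that one does not need the weight-$(n-k+1)$ generators, so the minimum-weight codewords already span $\C$.

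The main obstacle is the first/second bullets together with the inequality $d^\perp\ge k$ inside bullet three; these are where the genuine combinatorial content sits, via the arc bound and the moment computation. Once those are in hand, bullets four and five follow by essentially formal residual-code manipulations. If I wanted a self-contained write-up I would most likely cite Dodunekov--Landjev for the arc bound and Faldum--Willems for the moment computation, and only reprove the generation statements in detail.
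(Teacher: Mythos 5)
Your plan for the third bullet contains a genuine error, and it is the one place where the real content of the theorem sits. You claim that $d^\perp\ge k$ holds \emph{always} for an AMDS code, derived from Pless power moments ``knowing $A_{n-k},A_{n-k+1}$ determined by MDS-defect arithmetic''. Neither claim is tenable. The weight distribution of an AMDS (or NMDS) code is \emph{not} determined by $n,k,q$ — the paper stresses exactly this point around Theorem~\ref{thm-DLwtd}, where $A_{n-k}$ is a free parameter — so no moment computation from the parameters alone can force $A_i^\perp=0$ for $1\le i\le k-1$. Moreover the inequality $d^\perp\ge k$ is simply false without the hypothesis $n-k>q$: appending a zero coordinate to the ternary $[4,2,3]$ tetracode gives a $[5,2,3]$ AMDS code (here $n-k=q=3$) whose dual contains a word of weight $1<k$. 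More structurally, if $d^\perp\ge k$ held unconditionally, then every non-MDS AMDS code would automatically be NMDS, since excluding $d^\perp=k+1$ needs no hypothesis at all (an MDS dual would force $\C$ itself to be MDS, contradicting $d=n-k$); this contradicts the paper's explicit remark that the dual of an AMDS code may fail to be AMDS. So your use of the hypothesis is inverted: $n-k>q$ is needed precisely for the hard inequality $d^\perp\ge k$, not for excluding $d^\perp=k+1$. A correct route: if $d^\perp\le k-1$, take a minimum-weight dual codeword with support $T$; the subcode of $\C$ vanishing on $T$, punctured at $T$, is an $[\,n-d^\perp,\ \ge k-d^\perp+1,\ \ge n-k\,]$ code, which by Singleton must be MDS of dimension $\ge 2$ and minimum distance $n-k>q$ — impossible, since MDS codes of dimension at least $2$ have $d\le q$.

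Two secondary points. First, your ordering is awkward: the natural proof of the second bullet is to apply the first bullet to $\C^\perp$ (which by the third bullet is AMDS of dimension $n-k\ge 2$ when $n-k>q$), giving $n\le (n-k)+2q$, i.e.\ $k\le 2q$; your ``shorten $\C$ so that the shortened dual violates bullet one'' is not spelled out and is harder to make precise than simply proving the duality statement first. Second, for the length bound the natural geometric setting is the columns of a generator matrix in $\PG(k-1,q)$, where $d=n-k$ says every hyperplane contains at most $k$ of the $n$ points; choosing a codimension-$2$ subspace through $k-2$ of the points and counting the $q+1$ hyperplanes through it gives $n\le (k-2)+2(q+1)=k+2q$ directly, with no reduction to $(n,3)$-arcs — in the dual space $\PG(n-k-1,q)$ that you chose, the ``$(n,3)$-arc-type condition after a standard reduction'' is not standard and would itself need proof. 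The sketches for the two generation statements are plausible in outline but would need the residual/dimension computation written out. Finally, note that the paper itself gives no proof of this theorem — it refers to \cite{DodLan95} and \cite{FaldumWillems97} — so a self-contained write-up along the corrected lines above would in fact go beyond what the paper records.
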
 

Unlike MDS codes, the dual of an AMDS code may not be AMDS. 
A code $\C$ is said to be \emph{near MDS}\index{near MDS code} (NMDS\index{NMDS code} for short) 
if both $\C$ and $\C^\perp$ are AMDS. By definition, $\C$ is near NMDS if and only if $\C^\perp$ is so. 
The next theorem follows from the definition of NMDS codes. 

\begin{theorem}
An $[n, k]$ code $\C$ over $\gf(q)$ is NMDS if and only if $d(\C) + d(\C^\perp)=n$, where $d(\C)$ and 
$d(\C^\perp)$ denote the minimum distance of $\C$ and $\C^\perp$, respectively. 
\end{theorem}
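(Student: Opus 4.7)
The plan is to prove both directions of the biconditional by directly unpacking the defect-based definitions of AMDS and NMDS codes and then combining them with the Singleton bound and the well-known self-duality of the MDS property.

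For the forward direction, I would assume that $\C$ is NMDS, so that both $\C$ and $\C^\perp$ are AMDS by definition. Since $\C$ has parameters $[n, k]$ and defect $1$, one gets $d(\C) = n - k$. Since $\C^\perp$ has parameters $[n, n-k]$ and also has defect $1$, one gets $d(\C^\perp) = n - (n-k) = k$. Adding these yields $d(\C) + d(\C^\perp) = n$. This direction is essentially a computation from the definitions and carries no real difficulty.

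For the converse, I would suppose $d(\C) + d(\C^\perp) = n$ and apply the Singleton bound to both codes, writing $d(\C) = n - k + 1 - a$ and $d(\C^\perp) = k + 1 - b$ for nonnegative integers $a, b$. The hypothesis forces $a + b = 2$, so the possible splittings are $(a,b) \in \{(0,2), (1,1), (2,0)\}$. The case $(1,1)$ is precisely the assertion that both $\C$ and $\C^\perp$ are AMDS, i.e.\ that $\C$ is NMDS, so the content of the proof is to rule out the other two cases.

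The key observation I would invoke is the classical fact that the dual of an MDS code is itself MDS. If $a = 0$, then $\C$ is MDS, so $\C^\perp$ is MDS as well, forcing $b = 0$; this contradicts $a + b = 2$. The case $a = 2, b = 0$ is eliminated symmetrically by starting from $\C^\perp$. Consequently only $(a,b) = (1,1)$ survives, and $\C$ is NMDS. The main (and only mild) obstacle is making sure to cite the self-duality of the MDS property cleanly; once that is available the argument is a short bookkeeping exercise with the Singleton bound.
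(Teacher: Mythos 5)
Your proof is correct. The paper states this theorem without an explicit proof (it merely says the result follows from the definition of NMDS codes), and your argument --- computing $d(\C)=n-k$, $d(\C^\perp)=k$ for the forward direction, and for the converse writing the two Singleton defects $a,b\geq 0$ with $a+b=2$ and excluding the splittings $(0,2)$ and $(2,0)$ via the classical fact that the dual of an MDS code is MDS --- is exactly the standard way to fill in that omission; it is worth noting that the converse genuinely requires this MDS-duality ingredient, so the theorem is slightly more than a pure definitional unpacking.
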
 

The next theorem gives a characterisation of near MDS codes \cite{DodLan95}. 

\begin{theorem}\label{thm-19aug282} 
An $[n,k]$ code $\C$ over $\gf(q)$ is NMDS if and only if a parity-check matrix, say $H$, 
(and consequently every parity-check matrix of $\C$) satisfies the following conditions: 
\begin{enumerate} 
\item any $n-k-1$ colmuns of $H$ are linearly independent; 
\item there exist $n-k$ linearly dependent columns; and 
\item any $n-k+1$ columns of $H$ are of rank $n-k$.  
\end{enumerate}   
\end{theorem}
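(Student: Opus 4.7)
The plan is to translate each of the three conditions on the parity-check matrix $H$ into a condition on the minimum distance of either $\C$ or $\C^\perp$, and then combine these with the definition of an NMDS code. I will rely on two standard facts: (i) the minimum distance of a linear code equals the smallest number of linearly dependent columns of any parity-check matrix; and (ii) every nonzero codeword of $\C^\perp$ has the form $c = v^T H$ for some nonzero $v \in \gf(q)^{n-k}$, and the weight of $c$ equals the number of columns $h_j$ of $H$ with $v^T h_j \neq 0$.

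First I would apply fact (i) to the pair of conditions 1 and 2. Condition 1 forces $d(\C) \geq n-k$, while condition 2 produces a linear dependence among some $n-k$ columns and hence forces $d(\C) \leq n-k$. Together they are equivalent to $d(\C) = n-k$, i.e., to $\C$ being AMDS.

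Next I would handle condition 3 via fact (ii). The statement ``any $n-k+1$ columns of $H$ have rank $n-k$'' is equivalent to saying that no $n-k+1$ columns all lie in a single hyperplane $\{x : v^T x = 0\}$ with $v \neq 0$. By (ii), this is in turn equivalent to the assertion that no vector $v^T H$ with $v \neq 0$ vanishes on as many as $n-k+1$ coordinates, i.e., that every nonzero codeword of $\C^\perp$ has weight at least $k$. So condition 3 is equivalent to $d(\C^\perp) \geq k$.

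To finish, I would invoke the classical duality that an MDS code has an MDS dual. Assuming conditions 1, 2, 3, the Singleton bound already gives $d(\C^\perp) \leq k+1$; if equality held, $\C^\perp$ would be MDS, forcing $\C$ to be MDS as well and contradicting $d(\C) = n-k$. Hence $d(\C^\perp) = k$, and $\C$ is NMDS. The converse simply reverses the three equivalences, since an NMDS code satisfies $d(\C) = n-k$ and $d(\C^\perp) = k$ by definition. The only mildly subtle ingredient is this last duality step that excludes the MDS case; the rest is routine column-dependence bookkeeping.
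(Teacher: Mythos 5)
Your proof is correct: conditions 1--2 are exactly $d(\C)=n-k$ (via the column-dependence characterization of minimum distance), condition 3 is exactly $d(\C^\perp)\geq k$ (via viewing $H$ as a generator matrix of $\C^\perp$), and the MDS-duality step correctly rules out $d(\C^\perp)=k+1$, so the equivalence with NMDS follows. The paper states this theorem without proof, citing Dodunekov--Landgev, and your argument is precisely the standard one behind that characterization, so there is nothing to flag.
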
 

Theorem \ref{thm-19aug282} is equivalent to the following. 

\begin{theorem}\label{thm-19aug283} 
An $[n,k]$ code $\C$ over $\gf(q)$ is NMDS if and only if a generator matrix, say $G$, 
(and consequently every generator matrix of $\C$) satisfies the following conditions: 
\begin{enumerate} 
\item any $k-1$ colmuns of $G$ are linearly independent; 
\item there exist $k$ linearly dependent columns; and 
\item any $k+1$ columns of $G$ are of rank $k$.  
\end{enumerate}   
\end{theorem}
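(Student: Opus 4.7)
The plan is to deduce this theorem directly from Theorem \ref{thm-19aug282} by appealing to duality, without reproving anything from scratch about minimum distances and column ranks. The two crucial observations are: (i) by the very definition of NMDS, the condition ``$\C$ is NMDS'' is self-dual, i.e.\ $\C$ is NMDS if and only if $\C^\perp$ is NMDS; and (ii) any generator matrix $G$ of $\C$ serves as a parity-check matrix of the $[n, n-k]$ code $\C^\perp$. Together these let us translate the parity-check characterization of Theorem \ref{thm-19aug282} applied to $\C^\perp$ into a generator-matrix characterization of $\C$.

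Concretely, I would apply Theorem \ref{thm-19aug282} to $\C^\perp$, which has length $n$ and dimension $n-k$, using the parity-check matrix $G$ (of size $k\times n$). The three conditions in Theorem \ref{thm-19aug282}, with $k$ replaced by $n-k$ everywhere, become: any $n-(n-k)-1 = k-1$ columns of $G$ are linearly independent; there exist $n-(n-k)=k$ linearly dependent columns of $G$; and any $n-(n-k)+1 = k+1$ columns of $G$ have rank $n-(n-k)=k$. These are exactly the three conditions in the statement of the theorem. Since $\C$ is NMDS iff $\C^\perp$ is NMDS, the equivalence follows at once.

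There is essentially no obstacle in this argument; the work has already been done in Theorem \ref{thm-19aug282}. The only point that warrants a brief remark is the rank value in condition (3): the rank of any set of columns of $G$ is at most $k$ (the number of rows), and the substitution $n-\dim(\C^\perp)=k$ ensures this upper bound matches the prescribed rank in condition (3), so the translated condition is non-vacuous and aligns exactly with the statement to be proved.
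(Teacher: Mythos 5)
Your proof is correct and is exactly the intended argument: the paper gives no separate proof, merely asserting that Theorem~\ref{thm-19aug283} is equivalent to Theorem~\ref{thm-19aug282}, with the implicit reasoning being precisely your duality argument (a generator matrix of $\C$ is a parity-check matrix of $\C^\perp$, and $\C$ is NMDS if and only if $\C^\perp$ is). Your added remark about the rank bound in condition (3) is a harmless and accurate bit of extra care.
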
 

We have the following weight distribution formulas for NMDS codes. 

\begin{theorem}[\cite{DodLan95}]\label{thm-DLwtd}
Let $\C$ be an $[n, k, n-k]$ NMDS code. Then the weight distributions of $\C$ and $\C^\perp$ 
are given by 
\begin{eqnarray}\label{eqn-DL281}
A_{n-k+s} = \binom{n}{k-s} \sum_{j=0}^{s-1} (-1)^j \binom{n-k+s}{j}(q^{s-j}-1) + 
             (-1)^s \binom{k}{s}A_{n-k}
\end{eqnarray} 
for $s \in \{1,2, \ldots, k\}$, and 
\begin{eqnarray}\label{eqn-DL282}
A_{k+s}^\perp = \binom{n}{k+s} \sum_{j=0}^{s-1} (-1)^j \binom{k+s}{j}(q^{s-j}-1) + 
             (-1)^s \binom{n-k}{s}A_{k}^\perp 
\end{eqnarray} 
for $s \in \{1,2, \ldots, n-k\}$. 
\end{theorem}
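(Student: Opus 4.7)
The plan is to derive both formulas from the MacWilliams identities, exploiting the sparsity of the weight distributions forced by the NMDS condition: $A_i = 0$ for $1 \le i \le n-k-1$ and $A_i^\perp = 0$ for $1 \le i \le k-1$. First I would invoke the Pless form of the MacWilliams identities,
\begin{equation*}
\sum_{j=0}^{n-\nu} \binom{n-j}{\nu} A_j \;=\; q^{k-\nu} \sum_{j=0}^{\nu} \binom{n-j}{n-\nu} A_j^\perp, \qquad 0 \le \nu \le n.
\end{equation*}
Restricting to $\nu \in \{0, 1, \ldots, k-1\}$ kills every term with $j \ge 1$ on the right-hand side (since $\nu < k = d^\perp$), so it collapses to $q^{k-\nu}\binom{n}{\nu}$. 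On the left only $j = 0$ and $j \ge n-k$ contribute. Setting $j = n-k+s$ and $t = k - \nu$ transforms the identity into
\begin{equation*}
\sum_{s=0}^{t} \binom{k-s}{t-s} A_{n-k+s} \;=\; (q^t - 1)\binom{n}{k-t}, \qquad t = 1, 2, \ldots, k,
\end{equation*}
which is a lower-triangular linear system in $A_{n-k}, A_{n-k+1}, \ldots, A_n$ with $A_{n-k}$ acting as a free parameter.

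The next step is to invert this system. The triangular matrix $\bigl(\binom{k-s}{t-s}\bigr)_{t,s}$ has inverse $\bigl((-1)^{t-s}\binom{k-s}{t-s}\bigr)_{t,s}$ (verified by the standard alternating sum $\sum_u(-1)^u\binom{m}{u}\binom{m-u}{v-u}=\delta_{v,0}$), which yields
\begin{equation*}
A_{n-k+t} \;=\; (-1)^t \binom{k}{t} A_{n-k} \;+\; \sum_{s=1}^{t} (-1)^{t-s} \binom{k-s}{t-s} (q^s - 1)\binom{n}{k-s}.
\end{equation*}
Reindexing by $j = t - s$ and applying the elementary factorial identity
\begin{equation*}
\binom{k-t+j}{j}\binom{n}{k-t+j} \;=\; \binom{n}{k-t}\binom{n-k+t}{j}
\end{equation*}
converts the sum into the compact form stated in~(\ref{eqn-DL281}). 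Formula~(\ref{eqn-DL282}) then comes for free by symmetry: since $\C^\perp$ is itself an $[n,n-k,k]$ NMDS code with dual minimum distance $n-k$, applying the same argument to $\C^\perp$ (swapping $k\leftrightarrow n-k$ and $A_i\leftrightarrow A_i^\perp$) produces~(\ref{eqn-DL282}) directly.

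The only substantive obstacle is the combinatorial bookkeeping involved in inverting the triangular system and then recognising that the closed form matches the shape in the statement; this requires care but no new ideas beyond the standard binomial manipulations above. The conceptual content of the proof lies entirely in the observation that the NMDS hypothesis renders the MacWilliams identities for $\nu < k$ a \emph{closed} invertible triangular system in the unknowns $A_{n-k+s}$, whose unique solution depends on only the single parameter $A_{n-k}$ (and dually on $A_k^\perp$).
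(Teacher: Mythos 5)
Your derivation is correct: the Pless binomial-moment form of the MacWilliams identities, the collapse forced by $A_i=0$ for $1\le i\le n-k-1$ and $A_i^\perp=0$ for $1\le i\le k-1$, the triangular inversion with inverse entries $(-1)^{t-s}\binom{k-s}{t-s}$, and the final reindexing via $\binom{k-t+j}{j}\binom{n}{k-t+j}=\binom{n}{k-t}\binom{n-k+t}{j}$ all check out, as does obtaining \eqref{eqn-DL282} by applying the result to the NMDS code $\C^\perp$. Note that the paper itself gives no proof of this theorem --- it is quoted from Dodunekov and Landgev \cite{DodLan95} --- and your argument is essentially the standard derivation found there, so there is no divergence to report.
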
 

Note that $\sum_{i=0}^n A_i=q^{k}$ and $\sum_{i=0}^n A_i^\perp=q^{n-k}$. The $A_{n-k}$ 
in \eqref{eqn-DL281} and $A_k^\perp$ in \eqref{eqn-DL282} cannot be determined by these 
two equations and Equations (\ref{eqn-DL281}) and (\ref{eqn-DL282}). It is possible that 
two $[n, k, n-k]$ NMDS codes over $\gf(q)$ have different weight distributions.  Thus, 
the weight distribution of an $[n, k, n-k]$ NMDS code over $\gf(q)$ depends on not only  $n$, 
$k$ and $q$, but also some other parameters of the code. This is a major difference between MDS 
codes and NMDS codes. 

\begin{example} 
The $[11,6,5]$ ternary Golay code $\C_3$ has weight enumerator 
$$ 
1 +  132z^5 + 132z^6 + 330z^8 + 110z^9 + 24z^{11}.  
$$  
Take a generator matrix $G_3$ of the ternary Golay code $\C_3$. When $G_3$ is viewed 
as a matrix over $\gf(9)$, it generates a linear code $\C_9$ over $\gf(9)$ with parameters 
$[11,6,5]$ and weight enumerator 
$$ 
1+ 528z^5 + 528z^6 + 15840z^7 + 40920z^8 + 129800z^9 +  198000z^{10} + 145824z^{11}.  
$$ 
The dual code $\C_9^\perp$ has parameters $[11,5,6]$. Hence, the code $\C_9$ over $\gf(9)$ 
is NMDS. 

The extended code $\overline{\C_{(9,10,3,1)}}$ of the narrow-sense BCH code $\C_{(9,10,3,1)}$ 
over $\gf(9)$ has parameters $[11,6,5]$ and weight enumerator 
$$ 
1+ 240z^5 + 2256z^6 + 11520z^7 + 46680z^8 + 125480z^9 + 199728z^{10} + 145536z^{11}. 
$$
Its dual has parameters $[11,5,6]$. Thus,  $\overline{\C_{(9,10,3,1)}}$ is an NMDS code over 
$\gf(9)$, which has the same parameters as $\C_9$. However, the two codes have distinct 
weight enumerators. 
\end{example}

It follows from Theorem \ref{thm-DLwtd} that $A_{n-k}=A_k^\perp$ for any $[n, k, n-k]$ NMDS code. Consequently,  
any $[2k,k, k]$ NMDS code $\C$ and its dual are formally self-dual. 

The following is a corollary of Theorem \ref{thm-DLwtd}  \cite{DodLan95}. 

\begin{corollary}\label{cor-DL95} 
For an $[n,k, n-k]$ NMDS code over $\gf(q)$, it holds 
\begin{eqnarray}\label{eqn-DL283}
A_{n-k} \leq \binom{n}{k-1}\frac{q-1}{k}, 
\end{eqnarray} 
with equality if and only if $A_{n-k+1}=0$. By duality, 
\begin{eqnarray}
A_{k}^\perp \leq \binom{n}{k+1}\frac{q-1}{n-k}, 
\end{eqnarray}\label{eqn-DL284}  
with equality if and only if $A_{k+1}^\perp=0$. 
\end{corollary}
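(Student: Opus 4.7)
The plan is to apply Theorem~\ref{thm-DLwtd} at the single special value $s=1$ and then exploit the non-negativity of frequencies in the weight distribution. The first corollary inequality will fall out directly, and the dual inequality will follow by applying the same argument to $\C^\perp$ (which is also an NMDS code by definition).

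First I would substitute $s=1$ into formula \eqref{eqn-DL281}. The sum $\sum_{j=0}^{s-1}(-1)^j\binom{n-k+s}{j}(q^{s-j}-1)$ then collapses to the single term $j=0$, giving $(q-1)$, and the factor $\binom{k}{s}$ becomes $k$. So the formula simplifies to
\begin{equation*}
A_{n-k+1} \;=\; \binom{n}{k-1}(q-1) \;-\; k\, A_{n-k}.
\end{equation*}
Since $A_{n-k+1}$ is a nonnegative integer (it counts codewords of a given weight), rearranging yields
\begin{equation*}
k\, A_{n-k} \;\leq\; \binom{n}{k-1}(q-1),
\end{equation*}
which is the desired bound \eqref{eqn-DL283}. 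Equality holds if and only if $A_{n-k+1}=0$, as claimed.

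For the second inequality I would repeat the argument with the dual code. Since $\C^\perp$ is an $[n,n-k,k]$ NMDS code over $\gf(q)$, I can apply formula \eqref{eqn-DL282} with $s=1$ (or equivalently reapply \eqref{eqn-DL281} to $\C^\perp$ with its parameters), obtaining
\begin{equation*}
A_{k+1}^{\perp} \;=\; \binom{n}{k+1}(q-1) \;-\; (n-k)\, A_{k}^{\perp}.
\end{equation*}
Non-negativity of $A_{k+1}^\perp$ then gives $A_k^\perp \leq \binom{n}{k+1}(q-1)/(n-k)$, with equality if and only if $A_{k+1}^\perp=0$.

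There is no real obstacle here: the corollary is essentially a one-line consequence of Theorem~\ref{thm-DLwtd}. The only minor thing to be careful about is the cleanup of the sum at $s=1$ and the sign bookkeeping from the $(-1)^s\binom{k}{s}A_{n-k}$ term; everything else is immediate from the fact that weight enumerator coefficients are non-negative.
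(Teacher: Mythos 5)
Your proposal is correct and is exactly the intended derivation: the paper states this result as a direct corollary of Theorem~\ref{thm-DLwtd} (citing \cite{DodLan95}) without writing out the computation, and specializing \eqref{eqn-DL281} and \eqref{eqn-DL282} at $s=1$ and using non-negativity of $A_{n-k+1}$ and $A_{k+1}^\perp$ is precisely that one-line argument.
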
 

For an $[n,k, n-k]$ NMDS code over $\gf(q)$, we deduce from (\ref{eqn-DL283}) that 
$$ 
\frac{A_{n-k}}{q-1} \leq \binom{n}{k-1}\frac{1}{k} = \binom{n}{n-k} \frac{n-k}{n-k+1} <  \binom{n}{n-k}.  
$$
Therefore, if the minimum weight codewords of an $[n, k, n-k]$ NMDS code over $\gf(q)$  support a 
$t$-design, then the $t$-design cannot be the complete design. 
  
It will be shown in Section \ref{sec-desNMDScodes}  that an $[n,k, n-k]$ NMDS code $\C$ over $\gf(q)$ with $A_{n-k+1}=0$ or 
$A_{k+1}^\perp=0$ yields $t$-designs for some positive integer $t$, and are thus very attractive. 
One basic question is whether such code exists. We will look into 
this existence problem in Section \ref{sec-desNMDScodes}. 

As a consequence of Corollary \ref{cor-DL95}, we have the following result \cite{DodLan95}. 

\begin{corollary} 
For any $[n,k,n-k]$ NMDS code over $\gf(q)$ with $A_{n-k+1}=0$, we have $k \leq n/2$. 
\end{corollary}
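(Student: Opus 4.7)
The plan is to chain together three facts already in the paper: the upper bound on $A_{n-k}$ in Corollary \ref{cor-DL95}, the identity $A_{n-k}=A_k^\perp$ that follows from Theorem \ref{thm-DLwtd}, and the dual upper bound on $A_k^\perp$ in Corollary \ref{cor-DL95}. Under the hypothesis $A_{n-k+1}=0$, the first inequality becomes an equality, and comparing the resulting exact value of $A_{n-k}$ with the dual upper bound on $A_k^\perp$ forces a numerical inequality between binomial coefficients that simplifies to $k\leq n/2$.

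In detail, I would proceed as follows. First, since $A_{n-k+1}=0$, Corollary \ref{cor-DL95} gives
$$
A_{n-k} \;=\; \binom{n}{k-1}\,\frac{q-1}{k}.
$$
Next, as noted right after Theorem \ref{thm-DLwtd}, every $[n,k,n-k]$ NMDS code satisfies $A_{n-k}=A_k^\perp$. Combining with the dual bound in Corollary \ref{cor-DL95}, we obtain
$$
\binom{n}{k-1}\,\frac{q-1}{k} \;=\; A_{n-k} \;=\; A_k^\perp \;\leq\; \binom{n}{k+1}\,\frac{q-1}{n-k}.
$$
Since $q\geq 2$, the factor $q-1>0$ cancels, and I am left with
$$
\frac{1}{k}\binom{n}{k-1} \;\leq\; \frac{1}{n-k}\binom{n}{k+1}.
$$

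The final step is a routine rewriting of binomials: both sides equal $n!$ divided by $k!(n-k+1)!$ and $(k+1)!(n-k)!$ respectively, so the inequality collapses to
$$
(k+1)!\,(n-k)! \;\leq\; k!\,(n-k+1)!,
$$
that is, $k+1 \leq n-k+1$, which is exactly $k \leq n/2$. The only things to verify are that $k\geq 1$ and $n-k\geq 1$ so that the two bounds in Corollary \ref{cor-DL95} can be invoked; both hold because a nontrivial NMDS code must have $1\leq k\leq n-1$. I do not foresee any genuine obstacle here — the lemma is essentially a short computational consequence of the weight distribution identities already proved, and the only subtlety is spotting that forcing $A_{n-k+1}=0$ pins $A_{n-k}$ to the upper bound, which must then still respect the dual upper bound on $A_k^\perp$.
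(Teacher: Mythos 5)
Your derivation is correct and is exactly the route the paper intends: it states the corollary as a consequence of Corollary \ref{cor-DL95}, and your chain $A_{n-k}=\binom{n}{k-1}\frac{q-1}{k}$ (equality case under $A_{n-k+1}=0$), $A_{n-k}=A_k^\perp$, and the dual bound $A_k^\perp\leq\binom{n}{k+1}\frac{q-1}{n-k}$ reduces, after cancelling $q-1$ and simplifying factorials, to $k\leq n/2$. No gaps; the side conditions $1\leq k\leq n-1$ indeed hold for any NMDS code since both $\C$ and $\C^\perp$ have positive minimum distance.
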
     

NMDS codes have nice properties. In particular, up to a multiple, there is a natural 
correspondence between the minimum weight codewords of an NMDS code $\C$ and its dual 
$\C^\perp$, which follows from the next result \cite{FaldumWillems97}. 

\begin{theorem}\label{thm-121FW}
Let $\C$ be an NMDS code. Then for every minimum weight codeword $\bc$ in $\C$, there exists, 
up to a multiple, a unique minimum weight codeword $\bc^\perp$ in $\C^\perp$ such that 
$\support(\bc) \cap \support(\bc^\perp)=\emptyset$. In particular, $\C$ and $\C^\perp$ 
have the same number of minimum weight codewords. 
\end{theorem}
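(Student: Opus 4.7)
The plan is to work with a parity-check matrix $H$ of $\C$ and exploit the characterization of NMDS codes in Theorem~\ref{thm-19aug282}. Fix a minimum weight codeword $\bc \in \C$ and write $S = \support(\bc)$, so $|S|=n-k$, and set $T = \{1,\dots,n\}\setminus S$, so $|T|=k = d(\C^\perp)$. I need to show that the codewords of $\C^\perp$ whose support lies in $T$ form a $1$-dimensional subspace and that the nonzero ones have support exactly $T$; this would give both the existence and the up-to-scalar uniqueness of $\bc^\perp$.

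First I would translate everything into linear algebra on $H$. Let $H_{|S}$ denote the $(n-k)\times(n-k)$ submatrix of $H$ whose columns are indexed by $S$. The codewords of $\C^\perp$ supported in $T$ are exactly the vectors $\bu H$, where $\bu$ runs over the left kernel of $H_{|S}$; since $H$ has full row rank $n-k$, distinct $\bu$'s yield distinct codewords, so the dimension of this space of codewords equals $(n-k)-\rank(H_{|S})$. The relation $\bc H^{T}=0$ together with $c_i \ne 0$ for every $i \in S$ shows that the columns of $H$ indexed by $S$ are linearly dependent, i.e., $\rank(H_{|S}) \leq n-k-1$. On the other hand, Theorem~\ref{thm-19aug282} (condition~1) asserts that any $n-k-1$ columns of $H$ are linearly independent, forcing $\rank(H_{|S}) = n-k-1$. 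Hence the space of codewords of $\C^\perp$ supported in $T$ is exactly $1$-dimensional.

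Next I would argue that a nonzero codeword $\bc^\perp$ in this line has support equal to $T$ rather than some proper subset. Indeed $\support(\bc^\perp) \subseteq T$, and since $\bc^\perp \ne 0$ we have $|\support(\bc^\perp)| \geq d(\C^\perp) = k = |T|$, forcing equality. This codeword is unique up to a scalar and satisfies $\support(\bc)\cap\support(\bc^\perp) = \emptyset$, which establishes the first part of the theorem.

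For the counting statement, I would apply the same argument symmetrically: starting from a minimum weight codeword $\bc^\perp \in \C^\perp$ (with support of size $k$) the construction produces, up to a scalar, a unique minimum weight codeword in $\C$ whose support is the complementary $(n-k)$-set. The two assignments are inverse to one another on one-dimensional subspaces spanned by minimum weight codewords, giving a bijection between such lines in $\C$ and $\C^\perp$. Multiplying by $q-1$ yields $A_{n-k} = A_{k}^{\perp}$. The only subtle point is verifying that the left kernel of $H_{|S}$ indeed has dimension exactly one (as opposed to being trivial), but this is handled cleanly by combining $\bc H^T=0$ with the NMDS condition; no other step presents a genuine obstacle.
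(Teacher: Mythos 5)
Your proof is correct. Note that the paper itself does not prove this statement; it quotes it from the cited reference \cite{FaldumWillems97}. Your argument is the standard one and is fully self-contained given the tools the paper does state: the key step, that the square submatrix $H_{|S}$ has rank exactly $n-k-1$, combines the dependence forced by $\bc H^T=\bzero$ with condition~1 of Theorem~\ref{thm-19aug282}, so the codewords of $\C^\perp$ vanishing on $S$ form a line whose nonzero elements must have full support $T$ because $d(\C^\perp)=k=|T|$. The symmetric step for the counting claim implicitly applies the same rank argument to $\C^\perp$ (equivalently, condition~1 of Theorem~\ref{thm-19aug283} for a generator matrix of $\C$), which shows that the codewords of $\C$ supported inside a fixed minimum-weight support form a single line; it would be worth one explicit sentence that this is why the two assignments between lines are mutually inverse, after which multiplying the number of lines by $q-1$ indeed gives $A_{n-k}=A_k^\perp$. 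No genuine gap remains.
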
 

By Theorem \ref{thm-121FW}, if the minimum weight codewords of an NMDS code support a $t$-design, 
so do the minimum weight codewords of its dual, and the two $t$-designs are complementary of each other. 
The following result says that infinite families of NMDS codes do exist. 

\begin{theorem}[\cite{TVlad}] 
Algebraic geometric $[n, k, n-k]$ NMDS codes over $\gf(q)$, $q=p^m$, do exist for every $n$ with 
\begin{eqnarray*}
n \leq 
\left\{ 
\begin{array}{ll}
q + \lceil 2 \sqrt{q} \rceil & \mbox{ if $p$ divides $\lceil 2 \sqrt{q} \rceil$ and $m$ is odd,} \\
q + \lceil 2 \sqrt{q} \rceil +1 & \mbox{ otherwise,}   
\end{array}
\right. 
\end{eqnarray*}
and arbitrary $k \in \{2,3, \ldots, n-2\}$. 
\end{theorem}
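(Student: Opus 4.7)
The plan is to build the required codes as algebraic geometry (AG) codes on an elliptic curve $E$ over $\gf(q)$. Because $E$ has genus $g=1$, the Goppa bound $d\ge n-k+1-g$ becomes $d\ge n-k$, so any one-point AG code on $E$ is automatically MDS or AMDS; the task is then to choose divisors that force both the primal and the dual code to have Singleton defect exactly $1$.

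First I would select $E$ with $N:=\#E(\gf(q))$ as large as possible. The Hasse--Weil--Serre bound gives $N\le q+1+\lfloor 2\sqrt{q}\rfloor$, and Waterhouse's classification of admissible traces of Frobenius shows that this value is in fact attained except when $m$ is odd and $p\mid\lceil 2\sqrt{q}\rceil$, where the maximum drops by one. Transferring to $n\le N$ yields exactly the two stated bounds.

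Next, for each target dimension $k\in\{2,\dots,n-2\}$, I would pick $n$ distinct $\gf(q)$-rational points $P_1,\dots,P_n\in E$, set $D=P_1+\cdots+P_n$, and choose a $\gf(q)$-rational divisor $G$ of degree $k$ whose support is disjoint from $\{P_1,\dots,P_n\}$. Riemann--Roch on $E$ then gives $\dim C_L(D,G)=k$ and $d(C_L(D,G))\ge n-k$; since the canonical divisor class on an elliptic curve is trivial, the dual code identifies with $C_L(D,D-G)$, of dimension $n-k$ and with minimum distance at least $k$.

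The decisive step is the choice of $G$. A standard vanishing analysis shows that $d(C_L(D,G))=n-k$ (rather than $n-k+1$) if and only if $G\sim\sum_{i\in T}P_i$ in $\mathrm{Pic}^k(E)$ for some $k$-subset $T\subset\{1,\dots,n\}$. So I would fix such a $T$, take $G$ to be a $\gf(q)$-rational degree-$k$ divisor in the class $\bigl[\sum_{i\in T}P_i\bigr]$ with support avoiding $D$ (existence by translation on the Picard variety of $E$, which is $E$ itself), and observe that the dual class $[D-G]$ equals $\bigl[\sum_{i\notin T}P_i\bigr]$; the same criterion then forces $d(C_L(D,D-G))=k$. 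Hence both $\C$ and $\C^\perp$ are AMDS, so $\C$ is NMDS. The main obstacle sits here: one must move $G$ by a principal divisor to make its support disjoint from $\{P_1,\dots,P_n\}$ while keeping $G$ rational over $\gf(q)$ and in the prescribed class, and it is exactly at this point that Waterhouse's refinement of Hasse--Weil is needed to guarantee enough room on the curve to realize the stated range of $n$.
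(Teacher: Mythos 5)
First, note that the paper itself contains no proof of this statement: it is quoted verbatim from the book of Tsfasman and Vl\u{a}du\c{t} \cite{TVlad}, so there is no internal argument to compare yours against. Your route --- evaluation codes on an elliptic curve, the criterion that $d(C_L(D,G))=n-\deg G$ exactly when $G$ is linearly equivalent to a sum of $\deg G$ of the evaluation points, triviality of the canonical class to transfer the same criterion to the dual, and a point-count input from Deuring--Waterhouse--Serre --- is indeed the standard (and surely the intended) way to prove such a result, and the core NMDS argument in your second and third paragraphs is sound.

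There is, however, a genuine gap at the step you dispatch in one sentence: ``Transferring to $n\le N$ yields exactly the two stated bounds.'' It does not. The Waterhouse--Serre exceptional case is $m$ odd with $p\mid\lfloor 2\sqrt{q}\rfloor$ (not $p\mid\lceil 2\sqrt{q}\rceil$), and for $m$ odd one always has $N_{\max}(q)\le q+1+\lfloor 2\sqrt{q}\rfloor=q+\lceil 2\sqrt{q}\rceil$, whereas the theorem's ``otherwise'' branch asserts lengths up to $q+\lceil 2\sqrt{q}\rceil+1$. Concretely, for $q=7$ the statement promises $n\le 14$ while every elliptic curve over $\gf(7)$ has at most $13$ rational points, and for $q=128$ it promises $n\le 152$ while $N_{\max}(128)=150$; so the construction ``take $n\le N$ distinct rational points'' cannot reach the top of the claimed range in the odd-$m$, non-divisible case, and your sketch as written does not deliver the full statement (one would need, e.g., extended/lengthened elliptic codes or a corrected case distinction in terms of $\lfloor 2\sqrt{q}\rfloor$). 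Separately, your closing sentence misplaces the difficulty: moving $G$ within its class to a $\gf(q)$-rational divisor (not necessarily effective) with support disjoint from $P_1,\dots,P_n$ is a routine consequence of weak approximation and changes the code only by a monomial equivalence; Waterhouse enters solely in guaranteeing a curve with the required number of rational points, not in this support-avoidance step.
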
 

Further information on near MDS codes from algebraic curves can be found in \cite{AL05}. 
While it is easy to construct NMDS codes, we are interested in only NMDS codes holding $t$-designs. 
We will treat such NMDS codes in Section \ref{sec-desNMDScodes}.

We now introduce extremal NMDS codes. Let $n(k, q)$ denote the maximum possible length of 
an NMDS code of fixed dimension $k$ over a fixed field $\gf(q)$. Then we have the following \cite{DodLan95}. 

\begin{lemma}\label{lem-extremalNMDScode} 
Let notation be the same as before. Then $n(k, q) \leq 2q+k$. For any $[2q+k, k, 2q]$ NMDS code over $\gf(q)$, $A_{2q+1}=0$. 
\end{lemma}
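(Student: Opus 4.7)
The first inequality $n(k,q) \leq 2q+k$ is already the content of Theorem~\ref{thm-aug291}(1) applied to $\C$ (for $k\geq 2$; the case $k=1$ corresponds to trivial AMDS codes already excluded by convention), so no further argument is needed for that part. For the second assertion, the plan is to translate $A_{2q+1}>0$ into a geometric statement about the column multiset of a generator matrix of $\C$ and derive a contradiction from a tight pigeonhole count on hyperplanes through a cleverly chosen subspace.

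First I would fix a generator matrix $G$ of $\C$ with columns $c_1,\ldots,c_n \in \gf(q)^k$, and for each hyperplane $H$ of $\gf(q)^k$ set $m_H = |\{i : c_i \in H\}|$, writing $a_m$ for the number of hyperplanes with $m_H=m$. By Theorem~\ref{thm-19aug283}, any $k-1$ columns are linearly independent and any $k+1$ columns span $\gf(q)^k$, so $m_H \leq k$ for every $H$. The standard $(q-1)$-to-$1$ correspondence between nonzero codewords $uG$ and hyperplanes $\ker(x\mapsto u\cdot x)$ sends a codeword of weight $w$ to a hyperplane meeting the column set in $n-w$ points, whence $A_{n-m} = (q-1)\,a_m$ for every $m$. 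In particular, $A_{2q+1}=0$ is equivalent to $a_{k-1}=0$, which is what I aim to prove.

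Suppose, toward a contradiction, that some hyperplane $H$ satisfies $m_H = k-1$. Pick any $(k-2)$-subset $T$ of columns contained in $H$ and let $V_T := \mathrm{span}(T)$. The independence of any $k-1$ columns makes $V_T$ exactly $(k-2)$-dimensional, and forces $V_T$ to contain no columns outside $T$, since otherwise some $k-1$ columns would lie in a $(k-2)$-dimensional subspace. There are exactly $q+1$ hyperplanes of $\gf(q)^k$ containing $V_T$; each such hyperplane can contain at most $k-(k-2)=2$ columns outside $V_T$, while the total number of columns outside $V_T$ is $n-(k-2)=2(q+1)$. This pigeonhole bound is exactly tight, so every hyperplane through $V_T$ must absorb precisely $2$ columns beyond $V_T$ and therefore contain exactly $k$ columns. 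Since $H\supset V_T$, we get $m_H=k$, contradicting $m_H=k-1$. Hence $a_{k-1}=0$, and $A_{2q+1}=(q-1)a_{k-1}=0$.

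The main step I expect to matter is selecting the right subspace through which to count: the use of a $(k-2)$-dimensional $V_T$ spanned by $k-2$ columns is what makes the pigeonhole inequality exactly tight at the extremal length $n=2q+k$, for the quotient $\gf(q)^k/V_T \cong \gf(q)^2$ has precisely $q+1$ lines through the origin and each absorbs at most $2$ extra columns. Any other natural choice (e.g., quotienting by a $(k-1)$-dimensional subspace spanned by $k-1$ columns) loses the sharp equality that forces the contradiction. Once this geometric setup is in place, the remainder is routine bookkeeping from Theorem~\ref{thm-19aug283}.
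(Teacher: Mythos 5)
Your argument is correct. Note that the paper itself offers no proof of Lemma \ref{lem-extremalNMDScode}: it is quoted from \cite{DodLan95}, so the only fair comparison is with that source, and your hyperplane--counting argument is in substance the classical geometric proof given there (phrased in the language of points of $\PG(k-1,q)$ and hyperplanes through a $(k-3)$-flat spanned by $k-2$ points of the associated point set). The key steps all check out: $m_H\leq k$ follows from condition (3) of Theorem \ref{thm-19aug283}; any $k-2$ columns span a $(k-2)$-space containing no further columns by condition (1); the $q+1$ hyperplanes through $V_T$ partition the remaining $2q+2$ columns (two distinct hyperplanes through $V_T$ meet exactly in $V_T$, and $\mathrm{span}(V_T,c)$ is the unique one through a column $c\notin V_T$), so each must carry exactly two of them, forcing $m_H=k$ and contradicting $m_H=k-1$; and the $(q-1)$-to-$1$ codeword--hyperplane correspondence gives $A_{2q+1}=(q-1)a_{k-1}$. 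Two cosmetic points: the identity $A_{n-m}=(q-1)a_m$ should be restricted to $m\leq n-1$ (at $m=n$ it would miscount the zero codeword, though $a_n=0$ anyway), and the second assertion also tacitly uses $k\geq 2$, which is covered by the paper's convention excluding trivial AMDS codes. Finally, observe that your pigeonhole count already yields $n-(k-2)\leq 2(q+1)$, i.e.\ $n\leq 2q+k$, so the appeal to Theorem \ref{thm-aug291} for the first inequality is not even needed; the single counting argument proves both halves of the lemma at once, which is exactly how the bound and the vanishing of $A_{2q+1}$ are tied together in the original reference.
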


An NMDS code meeting the bound of Lemma \ref{lem-extremalNMDScode} is said to be \emph{extremal}\index{extremal NMDS code}, i.e., any $[2q+k, k, 2q]$ NMDS code over $\gf(q)$ is \emph{extremal}. 
The dual and the extended code of the $[11, 6, 5]$ ternary Golay code are extremal.  
NMDS codes with parameters $[2q+k, k+1, 2q-1]$ are said to be \emph{almost extremal}. 

\begin{theorem}[\cite{DeBoer96}]\label{thm-DeB291}
If $\C$ is a $[2q+k, k, 2q]$ extremal NMDS code over $\gf(q)$ with $k>q$, then $\C$ must be the dual of one 
of the following codes: 
\begin{itemize} 
\item the $[7,3,4]$ Hamming code over $\gf(2)$; 
\item the $[8,4,4]$ extended Hamming code over $\gf(2)$ (which is self-dual);
\item a $[10,6,4]$ punctured Golay code over $\gf(3)$; 
\item the $[11,6,5]$ Golay code over $\gf(3)$; and 
\item the $[12,6,6]$ extended Golay code over $\gf(3)$. 
\end{itemize} 
\end{theorem}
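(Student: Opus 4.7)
My plan is to translate the problem, via a parity-check matrix of $\C$, into a classification of constrained projective point configurations, and then exploit the rigidity forced by the extremality hypothesis.

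Since $n - k = 2q > q$ for any $q \geq 2$, the second bullet of Theorem \ref{thm-aug291} gives $k \leq 2q$; combined with the hypothesis $k > q$, this forces $k \in \{q+1, q+2, \ldots, 2q\}$, so only finitely many admissible dimensions arise for each $q$. Next I would fix a parity-check matrix $H$ of $\C$, a $2q \times (2q+k)$ matrix. By Theorem \ref{thm-19aug282}, its columns form a multiset $S$ of $2q+k$ points in $\PG(2q-1,q)$ such that any $2q-1$ of them are linearly independent, some $2q$ are dependent, and every $2q+1$ of them have rank $2q$. Thus $S$ is a near-arc, and the minimum-weight codewords of $\C$ correspond (up to scalars) to the dependent $2q$-subsets of $S$.

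The extremality hypothesis then makes this configuration rigid. By Lemma \ref{lem-extremalNMDScode}, $n = 2q+k$ already forces $A_{2q+1} = 0$, so Corollary \ref{cor-DL95} yields
$$
A_{2q} \;=\; \binom{2q+k}{k-1}\,\frac{q-1}{k}.
$$
Combined with the rank condition in Theorem \ref{thm-19aug282}, $A_{2q+1}=0$ forces every $(2q+1)$-subset of $S$ to contain \emph{exactly one} dependent $2q$-subset: any two would produce two independent codewords supported on those $2q+1$ positions, contradicting the rank being $2q$. Moreover Theorem \ref{thm-121FW} pairs each minimum-weight codeword of $\C$ with a minimum-weight codeword of $\C^\perp$ whose support lies in the complementary $k+1$ positions, producing a near-resolvable combinatorial structure on $S$.

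It remains to classify these near-arcs. The hypothesis $k > q$ is the decisive lever: long generic arcs in $\PG(2q-1,q)$ are ruled out, and the integrality and exact value of $A_{2q}$ above supplies a strong arithmetic restriction. A case analysis using standard bounds on $(n,d)$-arcs in projective space eliminates all $q \geq 4$, leaving only $q=2$ with $k \in \{3, 4\}$ and $q=3$ with $k \in \{4, 5, 6\}$. For each surviving pair, the uniqueness of the binary Hamming and extended Hamming codes, and of the ternary Golay code with its extended and punctured variants, identifies $\C^\perp$ with one of the listed codes and hence pins down $\C$. The main obstacle is precisely this final classification: eliminating extremal NMDS configurations for $q \geq 4$ requires genuine arc-theoretic bounds in $\PG(2q-1,q)$, while matching the surviving configurations in characteristic $2$ and $3$ to the Hamming/Golay family rests on classical uniqueness theorems as external inputs.
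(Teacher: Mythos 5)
There is a genuine gap here. Note first that the paper does not prove this statement at all: it is quoted from De Boer \cite{DeBoer96} as an external classification theorem, so the task of a ``proof'' is to actually supply De Boer's argument. Your sketch sets up the correct and standard framework --- the columns of a parity-check matrix as a point set in $\PG(2q-1,q)$ constrained by Theorem \ref{thm-19aug282}, the equality $A_{2q}=\binom{2q+k}{k-1}\frac{q-1}{k}$ from Corollary \ref{cor-DL95} once Lemma \ref{lem-extremalNMDScode} gives $A_{2q+1}=0$, and the observation that every $(2q+1)$-set of coordinates then carries exactly one minimum-weight support (equivalently, the Steiner-system structure of Theorem \ref{thm-DLdesign}) --- but it stops exactly where the content of the theorem begins. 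The two decisive steps, namely (i) eliminating every $q\geq 4$ with $q<k\leq 2q$, and (ii) showing that for $q=2,\,k\in\{3,4\}$ and $q=3,\,k\in\{4,5,6\}$ the code is forced to be (equivalent to) the stated Hamming/Golay relative, are asserted via ``standard bounds on $(n,d)$-arcs'' and ``classical uniqueness theorems'' without naming any specific bound or carrying out any computation. No arc-theoretic bound in $\PG(2q-1,q)$ is exhibited that rules out, say, a $[3q, q+1, 2q]$ extremal NMDS code over $\gf(4)$ or $\gf(5)$, and ``integrality of $A_{2q}$'' by itself does not: the natural arithmetic lever is the divisibility of the Steiner-system parameters $\lambda_s$ in (\ref{eqn-lambdas}) for the $S(k-1,k,2q+k)$ coming from Theorem \ref{thm-DLdesign}, combined with the window $q<k\leq 2q$, and you never bring this (or any substitute) to bear.

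Even granting step (i), step (ii) is also not reduced to anything verifiable. Uniqueness of a $[12,6,6]$ self-dual ternary code, of the $[10,4,6]$/$[10,6,4]$ pair, and of the $[7,3,4]$ and $[8,4,4]$ binary codes up to equivalence are nontrivial classical facts; invoking them wholesale is legitimate only if you first show that the extremal NMDS hypotheses pin the parameters down to exactly these values and that no inequivalent code with the same parameters can satisfy the NMDS conditions --- neither is argued. In short, the proposal is a correct reduction-and-plan, but the classification itself, which is the entire substance of the theorem, remains unproved.
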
  

In spite of Theorem \ref{thm-DeB291}, $[2q+k, k, 2q]$ extremal NMDS codes over $\gf(q)$ 
with $k \leq q$ may exist.  It will be shown that extremal NMDS codes yield $t$-designs for 
some $t$. Thus, we are very much fond of extremal NMDS codes.

\begin{theorem}[\cite{DeBoer96}]\label{thm-DeB292}
If $\C$ is a $[2q+k, k+1, 2q-1]$ almost extremal NMDS code over $\gf(q)$ with $k \geq q$, then $\C$ must be the dual of one 
of the following codes: 
\begin{itemize} 
\item a $[6,3,3]$ punctured Hamming code over $\gf(2)$; 
\item the $[7,3,4]$ Simplex code over $\gf(2)$;
\item a $[9,5,4]$ shortened punctured Golay code over $\gf(3)$; and  
\item a $[10,5,5]$ shortened Golay code over $\gf(3)$.  
\end{itemize} 
\end{theorem}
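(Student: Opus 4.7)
The plan is to follow the strategy De Boer used for the extremal case (Theorem \ref{thm-DeB291}), adapted to the almost-extremal setting. Write $\C$ for the $[2q+k, k+1, 2q-1]$ NMDS code in question, so that $\C^\perp$ has parameters $[2q+k, 2q-1, k+1]$ and is also NMDS. First I would apply Theorem \ref{thm-DLwtd} to pin down the weight distributions of $\C$ and $\C^\perp$ in terms of the single free parameter $A_{2q-1} = A_{k+1}^\perp$, where the equality between these two multiplicities comes from Theorem \ref{thm-121FW}. Corollary \ref{cor-DL95} bounds this quantity, and a variant of Lemma \ref{lem-extremalNMDScode} for the almost-extremal case should show that the weight distribution is forced into a narrow window once the hypothesis $k \geq q$ is imposed.

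Next I would translate the problem into projective geometry. The columns of a generator matrix of $\C$ form a multiset $S$ of $2q+k$ points in $\PG(k, q)$; by Theorem \ref{thm-19aug283}, any $k$ points of $S$ are linearly independent, $S$ contains a dependent $(k+1)$-subset, and any $(k+2)$-subset has rank $k+1$. This makes $S$ an arc-like configuration one step removed from a maximum arc. The regime $k \geq q$ is exactly where Segre-type restrictions on such arcs become sharp, so only a handful of configurations can survive. I would then perform a shortening (or puncturing) reduction: deleting a single coordinate lying in the support of a chosen minimum-weight codeword of $\C^\perp$ should produce an extremal NMDS code whose parameters fall under Theorem \ref{thm-DeB291}, thereby restricting $(q, k)$ to a finite list of small pairs which, on matching the statement, one reads off as $(q,k) \in \{(2,2), (2,3), (3,3), (3,4)\}$.

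The main obstacle will be twofold. First, the shortening step must be set up so that the resulting code is genuinely extremal NMDS rather than some degenerate shortening; this requires a careful choice of which coordinate to delete, together with control on how the minimum-weight supports of $\C$ and $\C^\perp$ are matched by Theorem \ref{thm-121FW}. The bound $k \geq q$ is precisely what supplies the flexibility to make this choice. Second, once $(q, k)$ has been pinned down to the finite list above, separating the two $\gf(3)$ cases — the $[9, 5, 4]$ shortened punctured Golay from the $[10, 5, 5]$ shortened Golay — amounts to tracking how the corresponding arc in $\PG(k, 3)$ embeds (or fails to embed) into the $[12, 6, 6]$ extended ternary Golay code via one versus two coordinate extensions. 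This final embedding/identification step is the combinatorial core of the argument and is where the bulk of the casework must be carried out; the $\gf(2)$ cases should, by comparison, be essentially mechanical, since the Hamming and Simplex codes are characterised by very rigid parameter constraints.
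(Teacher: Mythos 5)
This theorem is not proved in the paper at all: it is quoted verbatim from \cite{DeBoer96}, so there is no internal proof to compare against, and your proposal has to stand on its own. As it stands it has a genuine gap at its pivotal step. You propose to reduce to Theorem \ref{thm-DeB291} by ``deleting a single coordinate'' of $\C$, but no one-coordinate deletion of an almost extremal code can ever be an extremal NMDS code: an extremal code over $\gf(q)$ must satisfy $n-\dim = d = 2q$, whereas puncturing the $[2q+k,\,k+1,\,2q-1]$ code $\C$ gives $n'-\dim' = 2q-2$ (so $d' \le 2q-1$ by Singleton) and shortening gives $n'-\dim' = 2q-1$ with dimension $k$, so in neither case do the parameters have the extremal shape $[2q+\kappa,\kappa,2q]$. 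The reduction that is compatible with the two lists (each code in the present theorem is a one-coordinate puncture/shortening of a code in Theorem \ref{thm-DeB291}) goes the other way: one must show that $\C$ \emph{extends} to a $[2q+k+1,\,k+1,\,2q]$ code, i.e.\ that a distance-increasing one-coordinate extension always exists; then Theorem \ref{thm-DeB291} applies with parameter $k+1 > q$, and $\C$ is recovered by deleting the added coordinate. That extendability claim is the real content of such an argument, and your proposal neither states nor proves it. Note also that a deletion-based reduction would run into the hypothesis $k>q$ of Theorem \ref{thm-DeB291} in the boundary cases $(q,k)=(2,2)$ and $(3,3)$, where $k-1<q$, whereas the extension route avoids this.

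Beyond that, the remaining ingredients are deferred rather than supplied: the ``variant of Lemma \ref{lem-extremalNMDScode} for the almost-extremal case'' is not formulated, the Segre-type arc restrictions in the regime $k\ge q$ are invoked but not made precise, and you explicitly leave the identification of the two ternary codes (the $[9,5,4]$ and $[10,5,5]$ shortenings of Golay-type codes) as future casework. Your target parameter list $(q,k)\in\{(2,2),(2,3),(3,3),(3,4)\}$ and the geometric translation via Theorem \ref{thm-19aug283} (any $k$ columns independent, some dependent $(k+1)$-set, every $(k+2)$-set of rank $k+1$) are correct, so the skeleton is salvageable, but as written the argument would fail at the reduction step and is in any case an outline rather than a proof.
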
  

In spite of Theorem \ref{thm-DeB292}, $[2q+k, k+1, 2q-1]$ almost extremal NMDS codes over $\gf(q)$ 
with $k < q$ may exist. It is open if almost extremal NMDS codes hold $t$-designs in general or not.  
Information about near MDS codes over small fields can be found in \cite{DodLan00}.

\section{Infinite families of near MDS codes holding $t$-designs}\label{sec-desNMDScodes} 

In this section, for the first time we present  infinite families of NMDS codes 
holding an infinite family of $t$-designs for $t \geq 2$, and settle a 70-year-old problem.

\subsection{A general theorem about $t$-designs from NMDS codes} 

First of all, we point out that some NMDS codes do not hold simple designs at all. Below is an example. 

\begin{example} 
The extended code $\overline{\C_{(9,10,3,1)}}$ of the narrow-sense BCH code $\C_{(9,10,3,1)}$ 
over $\gf(9)$ has parameters $[11,6,5]$. Its dual has parameters $[11,5,6]$. Both 
$\overline{\C_{(9,10,3,1)}}$ and its dual $\overline{\C_{(9,10,3,1)}}^\perp$ do not hold 
simple $1$-designs according to our Magma computations. 
\end{example} 

Secondly, some NMDS codes may hold $t$-designs. The following theorem is interesting, 
as it tells us that near MDS codes could hold $t$-designs \cite{DodLan95}.  

\begin{theorem}\label{thm-DLdesign}
Let $\C$ be an $[n, k, n-k]$ NMDS code over $\gf(q)$. If there exists an integer $s \geq 1$ 
such that $A_{n-k+s}=0$. Then the supports of the codewords of weight $k$ in $\C^\perp$ form 
a $(k-s)$-design. In particular, the supports of the codewords of minimal weight in the dual 
of an extremal NMDS code form a Steiner system $S(k-1, k, 2q+k)$. 
\end{theorem}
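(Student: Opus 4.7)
The plan is to derive the $(k-s)$-design by applying the Assmus-Mattson Theorem (Theorem~\ref{thm-designAMtheorem}) with the roles swapped: $\C^\perp$ will play the part of the ambient code in the theorem and $\C$ the part of its dual. This way, the strength-$t$ design produced by Assmus-Mattson on weight-$i$ codewords of the ambient code becomes a design on weight-$i$ codewords of $\C^\perp$, which is exactly what the statement asks for.

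I would set the design strength to $t=k-s$. Since $s\geq 1$, we have $t<k=d(\C^\perp)$, the first hypothesis of Assmus-Mattson. The crucial hypothesis is the bound on the number $\tilde s$ of nonzero weights of the dual of the ambient code (i.e., of $\C$) in the range $1\leq i\leq n-t = n-k+s$. Because $\C$ has minimum distance $n-k$, the only weights of $\C$ that can appear in this range lie in $\{n-k,n-k+1,\ldots,n-k+s\}$, a set of size $s+1$. The standing hypothesis $A_{n-k+s}=0$ removes one index from this list, giving $\tilde s\leq s = k-t = d(\C^\perp)-t$, which is precisely the Assmus-Mattson bound. Since $A_k^\perp\neq 0$ (as $k=d(\C^\perp)$) and the inequality $k\leq w$ is immediate from the definition of $w$, Assmus-Mattson produces a $(k-s)$-design on the supports of weight-$k$ codewords of $\C^\perp$.

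For the Steiner-system claim, I specialise to an extremal $[2q+k,k,2q]$ NMDS code. Here Lemma~\ref{lem-extremalNMDScode} hands us $A_{2q+1}=A_{n-k+1}=0$, so the first part yields a $(k-1)$-$(2q+k,k,\lambda)$ design on the weight-$k$ codewords of $\C^\perp$. To pin down $\lambda=1$, I would count blocks. Any two minimum-weight codewords of $\C^\perp$ with the same support must be scalar multiples of each other, since an appropriate nontrivial linear combination would otherwise have weight strictly below $d^\perp=k$; hence the number of blocks is $b=A_k^\perp/(q-1)$. Theorem~\ref{thm-DLwtd} gives $A_k^\perp=A_{n-k}$, and the equality case of Corollary~\ref{cor-DL95}, activated by $A_{n-k+1}=0$, yields $A_{n-k}=\binom{n}{k-1}(q-1)/k$. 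Substituting into the design identity $\lambda\binom{n}{k-1}=b\cdot k$ produces $\lambda=1$, confirming the Steiner system $S(k-1,k,2q+k)$.

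The main obstacle is the verification of the Assmus-Mattson count: the $s+1$ candidate weights of $\C$ in $[n-k,n-k+s]$ are exactly one too many, so the hypothesis $A_{n-k+s}=0$ is used in an essentially tight way to trim the count down to $s$. Everything else is bookkeeping: the existence of the relevant codewords, the easy range check on $w$, and the scalar-multiple argument that converts codeword counts into block counts in the extremal specialisation.
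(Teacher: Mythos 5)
Your argument is correct. Note that the paper itself gives no proof of this statement --- it is quoted from Dodunekov--Landgev \cite{DodLan95} --- so there is no in-paper argument to compare against; your derivation via the Assmus--Mattson Theorem is a legitimate reconstruction. The key count checks out: with $\C^\perp$ (parameters $[n,n-k,k]$) as the ambient code and $t=k-s$, the nonzero weights of $\C$ in the range $[1,\,n-t]=[1,\,n-k+s]$ lie among $\{n-k,\dots,n-k+s\}$, and the hypothesis $A_{n-k+s}=0$ trims this to at most $s=d(\C^\perp)-t$, exactly the Assmus--Mattson bound; since $w\geq d$ always holds and $A_k^\perp\neq 0$, the weight-$k$ codewords of $\C^\perp$ yield the $(k-s)$-design. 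The Steiner specialisation is also sound: minimum-weight codewords of $\C^\perp$ sharing a support are proportional (otherwise a suitable linear combination has weight below $k$), so $b=A_k^\perp/(q-1)$; combining $A_k^\perp=A_{n-k}$ with the equality case of Corollary~\ref{cor-DL95} (activated by $A_{2q+1}=0$ from Lemma~\ref{lem-extremalNMDScode}) and the identity $\lambda\binom{n}{k-1}=b\,k$ gives $\lambda=1$.

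Two small caveats, neither fatal. First, for $t=k-s$ to be a positive integer one needs $s\leq k-1$; for $s\geq k$ the conclusion is vacuous, so this is only a boundary remark, but it is worth stating. Second, the Assmus--Mattson Theorem as printed in the paper counts nonzero dual weights over $0\leq i\leq v-t$; you (correctly) use the standard range $1\leq i\leq v-t$, which is also what the paper's own later applications implicitly assume --- a one-line remark acknowledging this reading would make the application airtight.
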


To apply Theorem \ref{thm-DLdesign}, one has to find NMDS codes satisfying $A_{n-k+s}=0$ for 
some $s<k$. The $[11,6,5]$ ternary Golay code and its duals as well as their extended codes 
are such NMDS codes. There are also several examples of such NMDS codes. But we are really 
interested in infinite families of such NMDS codes. 
In Section \ref{sec-designNMDSCode}, we will present such infinite family of ternary codes.  

\subsection{Infinite families of near MDS codes holding infinite families of $t$-designs}\label{sec-designNMDSCode} 

Throughout this section, let $p=p^s$, where $p$ is a prime and $s$ is a positive integer. In this section, we consider the 
narrow-sense BCH code $\C_{(q, q+1, 3, 1)}$ over $\gf(q)$ and its dual, and prove that they are near MDS and 
hold $3$-designs when $p=3$ and $2$-designs when $p=2$ and $s$ is even. 

We will need the following lemma whose proof is straightforward. 

\begin{lemma}\label{lem-tcm101}
Let $x, y, z \in \gf(q^2)^*$. Then 
\begin{eqnarray*}
\left|
\begin{array}{lll} 
x^{-1} & y^{-1} & z^{-1} \\ 
x  & y & z \\
x^2 & y^2 & z^2 
\end{array} 
\right| 
= 
\frac{(x-y)(y-z)(z-x)}{xyz} (xy+yz+zx).  
\end{eqnarray*} 
\end{lemma}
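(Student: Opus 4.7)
The identity is a purely computational one, so the plan is to reduce it to a standard Vandermonde-like evaluation. The first step I would take is to clear the inverses in the top row by column scaling: multiply column $1$ by $x$, column $2$ by $y$, and column $3$ by $z$. This multiplies the determinant by $xyz$, so writing $D$ for the original determinant we obtain
\begin{equation*}
xyz \cdot D \;=\; \left|
\begin{array}{lll}
1 & 1 & 1 \\
x^2 & y^2 & z^2 \\
x^3 & y^3 & z^3
\end{array}
\right|.
\end{equation*}
This replaces the original expression by a generalized Vandermonde determinant with row exponents $\{0,2,3\}$, which is a polynomial in $x,y,z$ and avoids all divisions.

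Next, to evaluate the right-hand side, the plan is to use a standard factor-hunting argument rather than blindly expanding. The determinant vanishes whenever any two of $x,y,z$ coincide (two columns become equal), so $(x-y)(y-z)(z-x)$ must divide it as a polynomial identity in $\Z[x,y,z]$. Since the determinant is antisymmetric under column swaps and $(x-y)(y-z)(z-x)$ is too, the quotient is a symmetric polynomial. By total degree counting—each monomial in the expanded determinant has degree $0+2+3=5$, and the Vandermonde factor has degree $3$—the quotient is a homogeneous symmetric polynomial of degree $2$ in three variables, hence a $\Z$-linear combination of the power sum $e_1^2=(x+y+z)^2$ and the elementary symmetric $e_2=xy+yz+zx$.

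Finally, I would pin down the two coefficients by specializing to two convenient triples of integers; for instance $(x,y,z)=(1,0,-1)$ kills $e_1$ and forces the coefficient of $e_2$ to equal $1$, and then $(x,y,z)=(2,1,-1)$ forces the coefficient of $e_1^2$ to vanish. This yields
\begin{equation*}
\left|
\begin{array}{lll}
1 & 1 & 1 \\
x^2 & y^2 & z^2 \\
x^3 & y^3 & z^3
\end{array}
\right|
\;=\; (x-y)(y-z)(z-x)\,(xy+yz+zx),
\end{equation*}
and dividing by $xyz$ (which is legal because $x,y,z\in\gf(q^2)^*$) gives the claimed formula. The whole argument is routine and characteristic-free, since it is an identity in $\Z[x,y,z,(xyz)^{-1}]$; there is no genuine obstacle, and the only thing to watch for is bookkeeping of signs when matching $(y-x)(z-y)(x-z)$ against $(x-y)(y-z)(z-x)$. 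An equally acceptable alternative would be to expand the $3\times 3$ determinant along the first row, clear the common factor $1/(xyz)$, and do two successive polynomial divisions by $(x-y)$ and $(y-z)$ in the resulting degree-$5$ expression; I would pick whichever presentation the authors judge cleaner.
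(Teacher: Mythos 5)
Your argument is correct. Note that the paper offers no proof at all here: it introduces the lemma with the remark that its proof is straightforward, so there is no official argument to diverge from. Your route (clear the inverses by scaling the columns by $x,y,z$, recognize the generalized Vandermonde determinant with exponents $\{0,2,3\}$, extract the factor $(x-y)(y-z)(z-x)$ by the vanishing-and-antisymmetry argument, and identify the degree-$2$ symmetric cofactor by two specializations) is a complete and characteristic-free verification; the specializations $(1,0,-1)$ and $(2,1,-1)$ do give an invertible linear system in the two unknown coefficients, so the identification is legitimate, and the final division by $xyz$ is justified since $x,y,z\in\gf(q^2)^*$. What the authors presumably had in mind is even shorter: expand along the first row to get $y^2z^2(z-y)-x^2z^2(z-x)+x^2y^2(y-x)$ over the common denominator $xyz$ and factor directly, or quote the generalized Vandermonde/Schur identity giving the cofactor $s_{(1,1)}=e_2=xy+yz+zx$ outright. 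Your version trades a few lines of algebra for a structural argument that generalizes; either is acceptable. Two trivial points of bookkeeping: $e_1^2$ is not a power sum (call it the square of $e_1=p_1$), and the sign remark at the end is moot since your numerical normalization was already made against $(x-y)(y-z)(z-x)$.
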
 

We will also need the following lemma shortly. 

\begin{lemma}\label{lem-tcm102}  
Let $U_{q+1}$ denote the set of all $(q+1)$-th roots of unity in $\gf(q^2)$. Suppose that 
$x, y, z$ are three pirwise distinct elements in $U_{q+1}$ such that 
\begin{eqnarray}\label{eqn-tcm0}
\left|
\begin{array}{lll} 
x^{-1} & y^{-1} & z^{-1} \\ 
x  & y & z \\
x^2 & y^2 & z^2 
\end{array} 
\right| 
=0. 
\end{eqnarray}
Then $(x/y)^3=1$, which implies that $3$ divides $q+1$. 
\end{lemma}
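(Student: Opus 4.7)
The plan is to combine Lemma \ref{lem-tcm101} with the Frobenius action on $U_{q+1}$ to extract two symmetric relations on $x,y,z$, and then reduce to a quadratic in $x/y$ whose roots are primitive cube roots of unity.

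First, I would apply Lemma \ref{lem-tcm101} directly to the determinant in (\ref{eqn-tcm0}). Since $x,y,z$ are pairwise distinct and nonzero, the factor $(x-y)(y-z)(z-x)/(xyz)$ is nonzero, so the vanishing of the determinant is equivalent to the single equation
\begin{equation*}
xy + yz + zx = 0.
\end{equation*}
This is the first of the two symmetric relations I want.

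Next, I would exploit membership in $U_{q+1}$: any $t \in U_{q+1}$ satisfies $t^{q+1}=1$, equivalently $t^q = t^{-1}$. Applying the Frobenius $u \mapsto u^q$ to the equation $xy+yz+zx=0$ and using this inversion rule gives
\begin{equation*}
\frac{1}{xy} + \frac{1}{yz} + \frac{1}{zx} = 0,
\end{equation*}
and clearing denominators by multiplying through by $xyz$ yields
\begin{equation*}
x + y + z = 0.
\end{equation*}
Thus I get both elementary symmetric functions $e_1(x,y,z)$ and $e_2(x,y,z)$ vanishing. This Frobenius step is the only real idea in the proof, and I expect it to be the main conceptual move (the rest is routine algebra).

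Finally, I would substitute $z = -(x+y)$ into $xy+yz+zx=0$ to obtain $x^2+xy+y^2=0$. Setting $u = x/y$ (allowed since $y \in U_{q+1}$ is nonzero), this becomes $u^2 + u + 1 = 0$. Multiplying by $u-1$ gives $u^3 = 1$, i.e. $(x/y)^3 = 1$. Since $x \neq y$, we have $u \neq 1$, so $u$ has order exactly $3$ in the multiplicative group of $\gf(q^2)$. But also $u = x/y \in U_{q+1}$, so $u^{q+1}=1$, forcing $3 \mid \gcd(3,q+1) \cdot (\text{order consideration})$; more precisely, the order $3$ of $u$ must divide $q+1$. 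This gives $3 \mid q+1$, completing the proof. As a sanity check, in characteristic $3$ one has $u^2+u+1=(u-1)^2$, which would force $u=1$, a contradiction, consistent with the fact that $3 \nmid q+1$ when $q$ is a power of $3$.
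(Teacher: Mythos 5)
Your proof is correct and follows essentially the same route as the paper: extract $xy+yz+zx=0$ from Lemma \ref{lem-tcm101}, use the Frobenius together with $t^q=t^{-1}$ on $U_{q+1}$ to get $x+y+z=0$, and combine to obtain $x^2+xy+y^2=0$, hence $(x/y)^3=1$ and $3\mid q+1$. The only (harmless) difference is that you apply the $q$-th power map directly to the scalar relation $xy+yz+zx=0$, whereas the paper raises the determinant identity to the $q$-th power and invokes Lemma \ref{lem-tcm101} a second time.
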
 

\begin{proof}
It follows from Lemma \ref{lem-tcm101} that 
\begin{eqnarray}\label{eqn-tcm101}
xy+yz+zx=0. 
\end{eqnarray} 
Raising both sides of (\ref{eqn-tcm0}) to the $q$-th power yields 
\begin{eqnarray}\label{eqn-tcm102}
\left|
\begin{array}{lll} 
x^{-q} & y^{-q} & z^{-q} \\ 
x^q  & y^q & z^q \\
x^{2q} & y^{2q} & z^{2q}  
\end{array} 
\right| 
=0. 
\end{eqnarray} 
Notice that $x, y, z \in U_{q+1}$. Equation (\ref{eqn-tcm102}) is the same as 
\begin{eqnarray}\label{eqn-tcm103}
\left|
\begin{array}{lll} 
x & y & z \\ 
x^{-1}  & y^{-1} & z^{-1} \\
x^{-2} & y^{-2} & z^{-2}  
\end{array} 
\right| 
=0. 
\end{eqnarray} 
It then follows from Lemma \ref{lem-tcm101} that (\ref{eqn-tcm103}) that 
$$ 
0=\frac{1}{xy}+\frac{1}{yz}+\frac{1}{zx}=\frac{x+y+z}{xyz}. 
$$
Consequently, 
\begin{eqnarray}\label{eqn-tcm104} 
x+y+z=0. 
\end{eqnarray} 
Combining (\ref{eqn-tcm101}) and (\ref{eqn-tcm104}) gives that $x^2+xy+y^2=0$. Thus $x^3=y^3$ 
and $(x/y)^3=1$. Note that $(x/y)^{q+1}=1$ and $x/y \neq 1$. We deduce that $3$ divides $q+1$. 
This completes the proof. 
\end{proof}

We are now ready to prove the following result about the code $\C_{(q, q+1, 3,1)}$. 

\begin{theorem}\label{thm-SQScode0}  
Let $q=p^s \geq 5$ with $s$ being a positive integer. Then the narrow-sense BCH code $\C_{(q, q+1, 3,1)}$ over $\gf(q)$ 
has parameters $[q+1, q-3, d]$, where $d=3$ if $3$ divides $q+1$ and $d \geq 4$ if $3$ does not divide $q+1$. 
\end{theorem}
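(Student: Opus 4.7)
The plan is to handle the dimension and the minimum distance separately, with the main work being the sharpening from $d\ge 3$ to $d\ge 4$ when $3\nmid q+1$.

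For the dimension, I would compute the relevant $q$-cyclotomic cosets modulo $n=q+1$. Since $q\equiv -1\pmod{q+1}$, one has $C_1=\{1,q\}$ and $C_2=\{2,q-1\}$, which are disjoint and of size two as soon as $q\ge 5$. Thus the generator polynomial $\m_{\beta}(x)\m_{\beta^2}(x)$ has degree $4$, giving $\dim \C_{(q,q+1,3,1)}=(q+1)-4=q-3$. The BCH bound applied to the run $\{1,2\}$ inside the defining set immediately yields $d\ge 3$.

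To establish $d\ge 4$ when $3\nmid q+1$, I would argue by contradiction. Suppose $\C_{(q,q+1,3,1)}$ contains a weight-$3$ codeword with support $\{x,y,z\}\subseteq U_{q+1}$ and nonzero coefficients $c_x,c_y,c_z\in\gf(q)^*$. The two BCH parity-check conditions are
\begin{align*}
c_x x+c_y y+c_z z&=0,\\
c_x x^2+c_y y^2+c_z z^2&=0.
\end{align*}
Raising the first identity to the $q$-th power and using $c_i^q=c_i$ together with $x^q=x^{-1}$ on $U_{q+1}$ produces a third identity
$$c_x x^{-1}+c_y y^{-1}+c_z z^{-1}=0.$$
Hence $(c_x,c_y,c_z)^\top$ is a nontrivial vector in the kernel of the $3\times 3$ matrix appearing in Lemma \ref{lem-tcm102}, forcing its determinant to vanish. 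That lemma then yields $(x/y)^3=1$ and in particular $3\mid q+1$, contradicting our hypothesis. Therefore no weight-$3$ codeword exists and $d\ge 4$.

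For the converse, when $3\mid q+1$, a primitive cube root of unity $\omega$ lies in $U_{q+1}$; the vector supported on $\{1,\omega,\omega^2\}$ with each nonzero entry equal to $1$ satisfies $1+\omega+\omega^2=0$ and $1+\omega^2+\omega^4=0$, so it is a weight-$3$ codeword of $\C_{(q,q+1,3,1)}$ and $d=3$. The main obstacle is the upgrade from $d\ge 3$ to $d\ge 4$: the BCH bound alone is too weak, since the defining set $\{\pm 1,\pm 2\}\pmod{q+1}$ contains no consecutive run of length $3$. Lemma \ref{lem-tcm102} is the essential ingredient that converts the Frobenius-augmented $3\times 3$ system into a clean arithmetic obstruction on $q+1$.
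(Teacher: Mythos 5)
Your proposal is correct and follows essentially the same route as the paper's proof: the dimension via the two disjoint size-two cyclotomic cosets $C_1,C_2$, the BCH bound for $d\ge 3$, the all-ones codeword supported on the cube roots of unity when $3\mid q+1$, and the Frobenius-augmented system combined with Lemma \ref{lem-tcm102} to rule out weight-$3$ codewords when $3\nmid q+1$.
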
 

\begin{proof}
Put $n=q+1$. 
Let $\alpha$ be a generator of $\gf(q^2)^*$ and $\beta=\alpha^{q-1}$. Then $\beta$ is an $n$-th root of unity 
in $\gf(q^2)$. Let $g_1(x)$ and $g_2(x)$ denote the minimal polynomial of $\beta$ and $\beta^2$ over $\gf(q)$, 
respectively. 
Note that $g_1(x)$ has only roots $\beta$ and $\beta^q$ and  $g_2(x)$ has roots $\beta^2$ and $\beta^{q-1}$. 
One deduces that $g_1(x)$ and $g_2(x)$ are distinct irreducible polynomials of degree $2$. 
By definition, $g(x):=g_1(x)g_2(x)$ is the generator polynomial of $\C_{(q, q+1, 3,1)}$. Therefore, 
the dimension of $\C_{(q, q+1, 3,1)}$ is $q+1-4$. 
Note that 
$g(x)$ has only roots $\beta, \beta^2, \beta^{q-1}$ and $\beta^q$. By the BCH bound, the minimum weight 
of $\C_{(q, q+1, 3,1)}$ is at least $3$. 
Put $\gamma=\beta^{-1}$.  Then $\gamma^{q+1}=\beta^{-(q+1)}=1$. 
It then follows from Delsarte's theorem that the trace expression of $\C_{(q, q+1, 3,1)}^\perp$ is given by 
\begin{eqnarray}
\C_{(q, q+1, 3,1)}^\perp=\{\bc_{(a,b)}: a, b \in \gf(q^2)\}, 
\end{eqnarray} 
where $\bc_{(a,b)}=(\tr_{q^2/q}(a\gamma^i +b \gamma^{2i}))_{i=0}^q$. 

Define 
\begin{eqnarray}
H=\left[ 
\begin{array}{rrrrr}
1  & \gamma^1 & \gamma^2 & \cdots & \gamma^q \\
1  & \gamma^2 & \gamma^4 & \cdots & \gamma^{2q} 
\end{array}
\right].  
\end{eqnarray} 
It is easily seen that $H$ is a parity-check matrix of $\C_{(q, q+1, 3,1)}$, i.e., 
$$ 
\C_{(q, q+1, 3,1)}=\{\bc \in \gf(q)^{q+1}: \bc H^T=\bzero\}. 
$$

Assume that $3$ divides $q+1$. Notice that $\gamma$ is a primitive $(q+1)$-th root of unity. It then follows 
from $\gamma^{q+1}=1$ that 
\begin{eqnarray}\label{eqn-19oct171}
1+ \gamma^{(q+1)/3} +  \gamma^{2(q+1)/3} = 0.
\end{eqnarray} 
Define a vector $\bc=(c_0, c_1, \ldots, c_q) \in \gf(q)^{q+1}$, where 
\begin{eqnarray*}
c_i=\left\{ 
\begin{array}{ll}
1 & \mbox{ if } i \in \left\{0, \frac{q+1}{3},  \frac{2(q+1)}{3} \right\}, \\
0 & \mbox{ otherwise.} 
\end{array}
\right. 
\end{eqnarray*}
Then $\bc \in \C_{(q, q+1, 3,1)}$. Consequently, $d=3$. 

Assume now that $3$ does not divide $q+1$. We only need to prove that $d \neq 3$. On the contrary, suppose 
$d=3$. Then there are three pairwise distinct elements $x, y, z$ in $U_{q+1}$ such that 
\begin{eqnarray}\label{eqn-tchm105}
a \left[ \begin{array}{c}
x \\
x^2
\end{array} 
\right] 
+ 
b \left[ \begin{array}{c}
y \\
y^2
\end{array} 
\right] 
+ 
c \left[ \begin{array}{c}
z \\
z^2
\end{array} 
\right] 
=0,  
\end{eqnarray} 
where $a, b, c \in \gf(q)^*$. 
Raising to the $q$-th power both sides of the equation $ax+by+cz=0$ yields 
\begin{eqnarray}\label{eqn-tchm106}
ax^{-1}+by^{-1}+cz^{-1}=0. 
\end{eqnarray}
Combining (\ref{eqn-tchm105}) and (\ref{eqn-tchm106}) gives 
\begin{eqnarray*}
a \left[ \begin{array}{c}
x^{-1} \\ 
x \\
x^2
\end{array} 
\right] 
+ 
b \left[ \begin{array}{c} 
y^{-1} \\ 
y \\
y^2
\end{array} 
\right] 
+ 
c \left[ \begin{array}{c} 
z^{-1} \\ 
z \\
z^2
\end{array} 
\right] 
=0. 
\end{eqnarray*} 
It then follows that 
\begin{eqnarray*}
\left|
\begin{array}{lll} 
x^{-1} & y^{-1} & z^{-1} \\ 
x  & y & z \\
x^2 & y^2 & z^2 
\end{array} 
\right| 
=0. 
\end{eqnarray*}
By Lemma \ref{lem-tcm102}, $3$ divides $q+1$. This is contrary to our assumption that $3$ does not divide $q+1$.  
This completes the proof. 
\end{proof}

The theorem below makes a breakthrough in 70 years in the sense that it presents the first family of linear codes 
meeting the condition of Theorem \ref{thm-DLdesign} and holding an infinite family of $3$-designs.  

\begin{theorem}\label{thm-SQScode} 
Let $q=3^s$ with $s \geq 2$. Then the narrow-sense BCH code $\C_{(q, q+1, 3,1)}$ over $\gf(q)$ 
has parameters $[q+1, q-3, 4]$, and its dual code $\C_{(q, q+1, 3,1)}^\perp$ has parameters $[q+1, 4, q-3]$ 
and weight enumerator 
\begin{eqnarray*}
1+ \frac{(q-1)^2q(q+1)}{24} z^{q-3} + \frac{(q-1)q(q+1)(q+3)}{4} z^{q-1} + \\ 
\frac{(q^2-1)(q^2-q+3)}{3} z^q +  \frac{3(q-1)^2q(q+1)}{8} z^{q+1}. 
\end{eqnarray*} 
Further, the minimum weight codewords in $\C_{(q, q+1, 3,1)}^\perp$ support a $3$-$(q+1, q-3, \lambda)$ design 
with 
$$ 
\lambda=\frac{(q-3)(q-4)(q-5)}{24}, 
$$ 
and the minimum weight codewords in $\C_{(q, q+1, 3,1)}$ support a $3$-$(q+1, 4, 1)$ design, i.e., 
a Steiner quadruple system\index{Steiner quadruple system} $S(3,4,3^s+1)$. Furthermore, 
the codewords of weight 5 in $\C_{(q, q+1, 3,1)}$ support a $3$-$(q+1, 5, (q-3)(q-7)/2)$ design.  
\end{theorem}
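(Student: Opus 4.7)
My plan is to establish that $\C=\C_{(q,q+1,3,1)}$ is NMDS with parameters $[q+1,q-3,4]$, compute the weight enumerator of $\C^\perp$ via Theorem~\ref{thm-DLwtd}, and extract the three $3$-designs. Theorem~\ref{thm-SQScode0} already furnishes $\dim\C = q-3$ and $d \geq 4$, since $q=3^s$ forces $q+1\equiv 1\pmod 3$. For the dual, the description $\C^\perp = \{(\tr_{q^2/q}(au+bu^2))_{u\in U_{q+1}}:a,b\in\gf(q^2)\}$ shows, via $u^q=u^{-1}$ on $U_{q+1}$, that the zeros of such a codeword are the roots in $U_{q+1}$ of $P_{a,b}(u) := bu^4+au^3+a^qu+b^q$, a polynomial of degree at most $4$; hence $d^\perp\geq q-3$.

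The combinatorial heart, and the step that genuinely uses characteristic $3$, is the claim that every $3$-subset $\{x_1,x_2,x_3\}\subset U_{q+1}$ extends uniquely to a $4$-subset satisfying $\sum_{j<k}x_jx_k=0$, via the explicit formula $x_4 = -(x_1x_2+x_1x_3+x_2x_3)/(x_1+x_2+x_3)$. Three checks are needed: (i) the denominator is nonzero, because $x_1+x_2+x_3=0$ would force $(x_1+x_2)^2=x_1x_2$, hence $(x_1/x_2)^3=1$, impossible in $U_{q+1}$ when $3\nmid q+1$; (ii) a direct computation with $\bar{x}_i=x_i^{-1}$ yields $\bar{x}_4=-(x_1+x_2+x_3)/(x_1x_2+x_1x_3+x_2x_3)$, so $x_4\bar{x}_4=1$ and $x_4\in U_{q+1}$; (iii) the condition $x_4=x_i$ reduces in $\gf(3)$, using $2=-1$, to $(x_i-x_j)(x_i-x_k)=0$, ruled out by distinctness. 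This identifies the $4$-subsets of $U_{q+1}$ with $\sum_{j<k}x_jx_k=0$ as the blocks of a Steiner quadruple system $S(3,4,q+1)$ of cardinality $(q+1)q(q-1)/24$.

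Matching $P_{a,b}(u)$ with $b\prod_{j=1}^{4}(u-\alpha_j)$ identifies weight-$(q-3)$ codewords of $\C^\perp$ with pairs (Steiner block, $b$) where $b\in\gf(q^2)^*$ satisfies $b^{q-1}=\prod\alpha_j$ (yielding $q-1$ choices of $b$, with $a$ then determined from the $u^3$-coefficient, and the remaining constraint automatically consistent for $\alpha_j\in U_{q+1}$). Thus $A_{q-3}^\perp=(q-1)^2q(q+1)/24$. Theorem~\ref{thm-121FW} then forces $A_4=A_{q-3}^\perp$ and identifies weight-$4$ supports in $\C$ with Steiner blocks, establishing $d=4$ and the NMDS property. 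The remaining frequencies of $\C^\perp$ follow by substituting $A_{q-3}^\perp$ into Theorem~\ref{thm-DLwtd}, with $A_{q-2}^\perp=0$ emerging by cancellation. Applying Theorem~\ref{thm-designAMtheorem} with $t=3$: the only nonzero weight of $\C^\perp$ in $[1,q-2]$ is $q-3$, so $s=1=d-t$, and one computes $w=4$ and $w^\perp=q-3$. This delivers the Steiner $S(3,4,q+1)$ on weight-$4$ codewords of $\C$ and, via \eqref{eqn-lambdas}, a complementary $3$-$(q+1,q-3,(q-3)(q-4)(q-5)/24)$ design on weight-$(q-3)$ codewords of $\C^\perp$.

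The weight-$5$ design lies outside the Assmus--Mattson range, so I would argue directly using the Steiner structure. The NMDS property (Theorem~\ref{thm-19aug283}) guarantees that any $q-4$ columns of a generator matrix of $\C$ are linearly independent, which forces the $\gf(q)$-space of codewords supported in any fixed $5$-subset $S\subset U_{q+1}$ to be exactly $1$-dimensional; its generator has weight $4$ (with support a Steiner block) when some $4$-subset of $S$ is a block, and weight $5$ otherwise. Consequently, weight-$5$ supports are exactly the $5$-subsets of $U_{q+1}$ containing no Steiner block, each supporting $q-1$ codewords. For fixed $\{x_1,x_2,x_3\}$ with Steiner fourth $y$, inclusion--exclusion on the $\binom{q-2}{2}$ two-subsets $\{x_4,x_5\}\subset U_{q+1}\setminus\{x_1,x_2,x_3\}$ removes the $q-3$ with $y\in\{x_4,x_5\}$ and, for each of the three pairs $\{i,j\}\subset\{1,2,3\}$, the $(q-3)/2$ with $\{x_i,x_j,x_4,x_5\}$ a Steiner block (no double counting, by Steiner uniqueness). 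This leaves $(q-3)(q-7)/2$ good 2-subsets, independent of the chosen 3-subset, yielding the claimed design. The principal obstacle is the combinatorial lemma of paragraph 2; once it is available, the rest reduces to NMDS weight-enumeration and careful counting.
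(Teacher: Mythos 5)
Your proposal is correct and reaches all of the stated conclusions, but it departs from the paper's proof at two substantive points. First, you compute $A_{q-3}^\perp$ directly on the dual side, by matching $bu^4+au^3+a^qu+b^q$ with $b\prod_{j}(u-\alpha_j)$ over a block and counting the $q-1$ admissible $b$ with $b^{q-1}=\prod_j\alpha_j$ (your observation that the remaining coefficient constraint is automatically consistent is correct); the paper instead first constructs the weight-$4$ codewords of $\C$ from the rank-$3$ condition on the matrix $M(x,y,z,w)$, obtains $d=4$ and $A_4$ there, and only then passes to the dual via Theorem \ref{thm-121FW}. Your direction makes the dual count fully explicit, but as written it is slightly circular: Theorem \ref{thm-121FW} presupposes that $\C$ is NMDS, which already requires $d=4$, so it cannot be what ``establishes $d=4$ and the NMDS property.'' The repair is immediate from facts you already have: your explicit dual codewords give $d^\perp=q-3$, so $\C^\perp$ is not MDS, hence $\C$ is not MDS, hence $d\le 4$, and Theorem \ref{thm-SQScode0} gives $d\ge 4$; only after this may you invoke Theorem \ref{thm-121FW} to get $A_4=A_{q-3}^\perp$ and the identification of weight-$4$ supports with blocks. (A similar micro-gap: in your check (ii) you should note $x_1x_2+x_1x_3+x_2x_3\neq 0$, which follows from check (i) by raising to the $q$-th power; otherwise $x_4=0$ and the computation $x_4\bar{x}_4=1$ is vacuous.) Second, for the weight-$5$ design you argue directly: the NMDS rank conditions force the space of codewords supported in any $5$-set to be one-dimensional, so weight-$5$ supports are exactly the $5$-sets containing no Steiner block, and your inclusion--exclusion count $(q-3)(q-7)/2$ is correct, with the disjointness of the excluded cases guaranteed by Steiner uniqueness exactly as you say. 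The paper instead appeals to the Assmus--Mattson theorem a second time; note that with the paper's stated version one computes $w=4$, so weight $5$ in $\C$ lies outside the stated range, and your direct argument is actually the more airtight route here, since it supplies precisely the ``each support carries exactly $q-1$ codewords'' fact that the range condition in Assmus--Mattson is designed to guarantee. Everything else --- the use of Theorem \ref{thm-SQScode0}, the unique completion $x_4=-(x_1x_2+x_1x_3+x_2x_3)/(x_1+x_2+x_3)$ with its three verifications, Theorem \ref{thm-DLwtd} for the weight enumerator with $A_{q-2}^\perp=0$, and Assmus--Mattson with $t=3$, $s=1$, $w^\perp=q-3$ for the dual design --- matches the paper's argument.
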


\begin{proof}
We follow the notation of the proof of Theorem \ref{thm-SQScode0}. Since $3$ does not divide $q+1=3^s+1$, 
by Theorem \ref{thm-SQScode0} the minimum distance $d$ of $\C_{(q, q+1, 3,1)}$ is at least 4. We now 
prove that $d=4$ and the codewords of weight $4$ in $\C_{(q, q+1, 3,1)}$ support a $3$-$(q+1, 4, 1)$ design.  

Let $x, y, z$ be three pairwise distinct elements in $U_{q+1}$. We conclude that $x+y+z \neq 0$. Suppose on 
the contrary that $x+y+z=0$. We have then 
$$ 
0=(x+y+z)^q=x^q+y^q+z^q=\frac{1}{x}+\frac{1}{y}+\frac{1}{z}=\frac{xy+yz+zx}{xyz}. 
$$ 
In summary, we have  
\begin{eqnarray*}
\left\{ 
\begin{array}{l}
x+y+z=0, \\
xy+yz+zx=0,  
\end{array}
\right. 
\end{eqnarray*} 
which is the same as 
\begin{eqnarray*}
\left[ 
\begin{array}{l}
x \\
x^2 
\end{array}
\right] + 
\left[ 
\begin{array}{l}
y \\
y^2 
\end{array}
\right] +
\left[ 
\begin{array}{l}
z \\
z^2 
\end{array}
\right] = \bzero.   
\end{eqnarray*}
This means that $\C_{(q, q+1, 3,1)}$ has a codeword of weight $3$, which is contrary to Theorem  \ref{thm-SQScode0}. 

We now prove that there is a unique $w \in U_{q+1} \setminus \{x, y, z\}$ such that 
\begin{eqnarray}\label{eqn-tcm161}
\lefteqn{\left|
\begin{array}{llll} 
x^{-2} & y^{-2} & z^{-2}  & w^{-2} \\ 
x^{-1}  & y^{-1} & z^{-1} & w^{-1} \\ 
x & y & z & w \\  
x^{2} & y^{2} & z^{2}  & w^{2} 
\end{array} 
\right|  }  \nonumber \\
&& = \frac{(z-w)(y-w)(y-z)(x-w)(x-z)(x-y)}{(xyzw)^2} (xy+xz+xw+yz+yw+zw)  \nonumber \\ 
&& =0. 
\end{eqnarray} 
Note that $(z-w)(y-w)(y-z)(x-w)(x-z)(x-y) \neq 0$. It follows from (\ref{eqn-tcm161}) that 
\begin{eqnarray}\label{eqn-tcm162}
w=-\frac{xy+yz+zx}{x+y+z}. 
\end{eqnarray} 
We need to prove that $w \in U_{q+1}$. Note that 
$$ 
w=- xyz \frac{\frac{1}{x} + \frac{1}{y} + \frac{1}{z}}{x+y+z}= - xyz \frac{(x+y+z)^q}{x+y+z}. 
$$ 
We then have 
$$ 
w^{q+1}=(- xyz)^{q+1} (x+y+z)^{q^2-1} =1. 
$$ 
By definition,  $w \in U_{q+1}$. 

We now prove that $w \neq x$. Suppose on the contrary that $w=x$, then 
$$ 
x = -\frac{xy+yz+zx}{x+y+z}, 
$$ 
which yields 
$$ 
(x-z)(x-y)=0. 
$$ 
Whence, $x=z$ or $x=z$, which is contrary to our assumption that $x, y, z$ are three pairwise 
distinct elements in $U_{q+1}$. Due to symmetry, $w \neq y$ and $w \neq z$. 
The uniqueness of $w$ is justified by (\ref{eqn-tcm162}). 

Note that $U_{q+1}=\{1, \gamma, \gamma^2, \cdots, \gamma^q\}$. Let $\{x, y, z, w\}$ be 
any 4-subset of $U_{q+1}$ such that (\ref{eqn-tcm161}) holds. Without loss of generality, 
assume that 
$$ 
x=\gamma^{i_1}, \ y=\gamma^{i_2}, \ z=\gamma^{i_3},  \ w=\gamma^{i_4},    
$$ 
where $0 \leq i_1<i_2<i_3<i_4 \leq q$.  Since $d \geq 4$, the rank of the matrix 
\begin{eqnarray}\label{eqn-tcm164} 
M(x, y, z, w):=\left[ 
\begin{array}{llll} 
x^{-2} & y^{-2} & z^{-2}  & w^{-2} \\ 
x^{-1}  & y^{-1} & z^{-1} & w^{-1} \\ 
x & y & z & w \\  
x^{2} & y^{2} & z^{2}  & w^{2} 
\end{array} 
\right] 
\end{eqnarray} 
equals $3$. Let $(u_{i_1}, u_{i_2}, u_{i_3}, u_{i_4})$ denote a nonzero solution of 
\begin{eqnarray*} 
\left[ 
\begin{array}{llll} 
x^{-2} & y^{-2} & z^{-2}  & w^{-2} \\ 
x^{-1}  & y^{-1} & z^{-1} & w^{-1} \\ 
x & y & z & w \\  
x^{2} & y^{2} & z^{2}  & w^{2} 
\end{array} 
\right] 
\left[ 
\begin{array}{llll} 
u_{i_1} \\ 
u_{i_2} \\ 
u_{i_3} \\  
u_{i_4}  
\end{array} 
\right] 
= \bzero. 
\end{eqnarray*} 
Since the rank of the matrix of (\ref{eqn-tcm164}) is $3$, all these $u_{i_j} \neq 0$.   
Define a vector $\bc=(c_0, c_1, \ldots, c_q) \in \gf(q)^{q+1}$, where $c_{i_j}=u_{i_j}$ for $j \in \{1,2,3,4\}$ 
and $c_h =0$ for all $h \in \{0,1, \ldots, q\} \setminus \{i_1, i_2, i_3, i_4\}$. It is easily seen that $\bc$ is a 
codeword of weight $4$ in $\C_{(q, q+1, 3,1)}$. The set $\{a\bc: a \in \gf(q)^*\}$ consists of all such codewords 
of weight $4$ with nonzero coordinates in $\{i_1, i_2, i_3, i_4\}$. Hence, $d=4$. Conversely, every codeword of weight $4$ in $\C_{(q, q+1, 3,1)}$ with 
nonzero coordinates in $\{i_1, i_2, i_3, i_4\}$ must correspond to the set $\{x,y,z,w\}$. Hence, every codeword 
of weight $4$ and its nonzero multiples in $\C_{(q, q+1, 3,1)}$ correspond to such set $\{x,y,z,w\}$ uniquely. 
We then deduce that the codewords of weight $4$ in  $\C_{(q, q+1, 3,1)}$ support a $3$-$(q+1, 4, 1)$ 
design. As a result,  
$$ 
A_4=(q-1) \frac{\binom{q+1}{3}}{\binom{4}{3}} = \frac{(q-1)^2q(q+1)}{24}.
$$

Note that $\C_{(q, q+1, 3,1)}$ has parameters $[q+1, q-3, 4]$. We now prove that the minimum distance $d^\perp$ 
of $\C_{(q, q+1, 3,1)}^\perp$ is equal to $q-3$. Recall that 
\begin{eqnarray*}
\C_{(q, q+1, 3,1)}^\perp=\{\bc_{(a,b)}: a, b \in \gf(q^2)\}, 
\end{eqnarray*} 
where $\bc_{(a,b)}=(\tr_{q^2/q}(a\gamma^i +b \gamma^{2i}))_{i=0}^q$. 
Let $u \in U_{q+1}$. Then 
$$ 
\tr_{q^2/q}(au+bu^2)=au+bu^2 + a^qu^{-1}+b^q u^{-2} 
=u^{-2}(bu^4+au^3 +a^qu+b^q). 
$$ 
Hence, there are at most four $u \in U_{q+1}$ such that $\tr_{q^2/q}(au+bu^2)=0$ if $(a, b) \neq (0,0)$. 
As a result, for $(a, b) \neq (0,0)$ 
we have 
$$ 
\wt(\bc_{(a,b)}) \geq q+1-4=q-3.  
$$ 
This means that $d^\perp \geq q-3$. If $d^\perp=q-2$, then $\C_{(q, q+1, 3,1)}^\perp$ would be an MDS code and  
$\C_{(q, q+1, 3,1)}$ would also be an MDS code, which leads to a contradiction. We then conclude that 
$d^\perp=q-3$. Now both $\C_{(q, q+1, 3,1)}$ and its dual are AMDS. By definition, both $\C_{(q, q+1, 3,1)}$ 
and its dual are NMDS. 
It then follows from Theorem \ref{thm-121FW} that 
$$ 
A_{q-3}^\perp =A_4= \frac{(q-1)^2q(q+1)}{24}. 
$$ 
Applying Theorem \ref{thm-DLwtd}, one obtains the desired weight enumerator of  $\C_{(q, q+1, 3,1)}^\perp$. 
In particular, $A_{q-2}^\perp =0$. It then follows from the Assmus-Mattson Theorem that the minimum weight 
codewords in $\C_{(q, q+1, 3,1)}^\perp$ support a $3$-$(q+1, q-3, \lambda)$ design 
with 
$$ 
\lambda=\frac{(q-3)(q-4)(q-5)}{24}.  
$$ 
Again by Theorem \ref{thm-DLwtd}, 
$$ 
A_5=\binom{q+1}{q-4}(q-1)+(q-3)A_4 = \frac{(q-7)(q-3)(q-1)^2q(q+1)}{5!}. 
$$ 
It then follows from the Assmus-Mattson Theorem again that the codewords of weight 5 in $\C_{(q, q+1, 3,1)}$ 
support a $3$-$(q+1, 5, (q-3)(q-7)/2)$ simple design. 
\end{proof}

The proof of Theorem \ref{thm-SQScode} also proved the following theorem. 

\begin{theorem}
Let $q=3^s$ with $s \geq 2$. Let $\alpha$ be a generator of $\gf(q^2)^*$, and put $\gamma=\alpha^{-(q-1)}$. 
Define $U_{q+1}=\{1, \gamma, \gamma^2, \ldots, \gamma^q\}$ and 
$$ 
\cB=\left\{\{x,y,z,w\} \in \binom{U_{q+1}}{4}: xy+xz+xw+yz+yw+zw=0\right\}.  
$$  
Then $(U_{q+1}, \cB)$ is a Steiner system 
$S(3,4,3^s+1)$, and is isomorphic to the Steiner system supported by the minimum weight codewords of 
the code $\C_{(q, q+1, 3,1)}$. 
\end{theorem}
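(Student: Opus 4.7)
The plan is to piggyback on the proof of Theorem \ref{thm-SQScode}, which already contains every ingredient needed. That proof establishes, via the Cauchy-like determinant identity displayed in equation (\ref{eqn-tcm161}), that for four pairwise distinct elements $x,y,z,w \in U_{q+1}$ the matrix $M(x,y,z,w)$ from equation (\ref{eqn-tcm164}) is singular if and only if $xy+xz+xw+yz+yw+zw = 0$. Because for $u \in U_{q+1}$ the map $u \mapsto u^{-1}$ coincides with $u \mapsto u^q$, raising a column relation to the $q$-th power shows that singularity of $M(x,y,z,w)$ is equivalent to the existence of a nonzero $\gf(q)$-linear combination of the corresponding columns of the parity-check matrix $H$ of $\C_{(q,q+1,3,1)}$; equivalently, $\{x,y,z,w\}$ is the support of some weight-$4$ codeword of $\C_{(q,q+1,3,1)}$.

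Given this equivalence, I would transport everything via the natural bijection $\phi : U_{q+1} \to \{0,1,\ldots,q\}$, $\gamma^i \mapsto i$, which identifies $U_{q+1}$ with the coordinate set of $\C_{(q,q+1,3,1)}$. Under $\phi$, the family $\cB$ maps onto the set of supports of weight-$4$ codewords, and Theorem \ref{thm-SQScode} has already established that the latter forms a Steiner system $S(3,4,3^s+1)$. This would deliver both assertions at once: $(U_{q+1},\cB)$ is an $S(3,4,3^s+1)$, and it is isomorphic (via $\phi$) to the Steiner system supported by the minimum weight codewords of $\C_{(q,q+1,3,1)}$.

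For readers preferring a direct verification of the Steiner property from the definition of $\cB$, it suffices to repeat the unique-fourth-element computation from the proof of Theorem \ref{thm-SQScode}: for any pairwise distinct $x,y,z \in U_{q+1}$ one first observes $x+y+z \neq 0$ (else $\C_{(q,q+1,3,1)}$ would contain a weight-$3$ codeword, contradicting Theorem \ref{thm-SQScode0}), so that solving the symmetric-polynomial equation for $w$ yields the unique candidate $w = -(xy+yz+zx)/(x+y+z)$; a short $(q+1)$-th power computation confirms $w \in U_{q+1}$, and the symmetry argument of that proof rules out $w \in \{x,y,z\}$. The only genuine obstacle is notational bookkeeping: making sure the matrix-rank condition of Theorem \ref{thm-SQScode} and the symmetric-polynomial condition defining $\cB$ are recognised as the same condition. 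This is transparent once one notes that both amount to the vanishing of $\det M(x,y,z,w)$, so no new calculation is required.
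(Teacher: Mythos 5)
Your proposal is correct and follows essentially the same route as the paper, which states this result precisely as a byproduct of the proof of Theorem \ref{thm-SQScode}: the identification of $\cB$ (via $\gamma^i \mapsto i$) with the supports of weight-$4$ codewords through the vanishing of $\det M(x,y,z,w)$, together with the unique-fourth-element argument, is exactly the paper's justification. No new issues arise in your write-up.
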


It is known that a Steiner quadruple system $S(3,4, v)$ exists if and only if $v \equiv 2, 4 \pmod{6}$ \cite{Hanani60}. 
There are two different constructions of an infinite family of Steiner systems $S(3, q+1, q^s+1)$ for $q$ being a prime 
power and $s \geq 2$. 
The first produces the spherical designs due to \citet{Witt38}, which is based on the action of $\PGL_2( \gf(q^s))$ on the base block 
$\gf(q) \cup \{\infty\}$. The automorphism group of the spherical design contains the group 
$\PGaL_2(\gf(q^s))$. The second construction was proposed in \cite{KeyWagner86}, and is based on affine spaces. 
The Steiner systems  $S(3, q+1, q^s+1)$ from the two constructions are not isomorphic \cite{KeyWagner86}. 
  
When $q=p^s$ for $p > 3$ and $3$ does not divide $q+1$,  the code  $\C_{(q, q+1, 3,1)}$ of Theorem \ref{thm-SQScode} is still  NMDS, but 
it does not hold $2$-designs according to Magma experiments. The case $q=3^s$ is really special. 
When $s \in \{2,3\}$, the Steiner quadruple system $S(3,4,3^s+1)$ of  Theorem \ref{thm-SQScode} 
is isomorphic to the spherical design with the same parameters. We conjecture that they are isomorphic 
in general, but do not have a proof. The first contribution  of Theorem \ref{thm-SQScode} is a coding-theoretic 
construction of the spherical quadruple systems  $S(3,4,3^s+1)$. 
The second contribution is that it presents the first infinite family of NMDS codes 
holding an infinite family of $3$-designs since the first NMDS ternary code discovered 70 years ago 
by \cite{Golay49}.

It was shown that  the total number of nonisomorphic cyclic Steiner quadruple systems $S(3, 4, 28)$ is $1028387$ \cite{CFFHO}, which is a big number. This number indicates that it is a hard problem to classify Steiner quadruple systems.

A family of NMDS codes may not satisfy the condition of Theorem \ref{thm-DLdesign} 
(i.e., the conditions in the Assmus-Mattson theorem), 
but could still hold $2$-designs. The next theorem introduces a family of such NMDS 
codes and their designs.   

\begin{theorem}\label{thm-SQScode2} 
Let $q=2^s$ with $s \geq 4$ being even. Then the narrow-sense BCH code $\C_{(q, q+1, 3,1)}$ over $\gf(q)$ 
has parameters $[q+1, q-3, 4]$, and its dual code $\C_{(q, q+1, 3,1)}^\perp$ has parameters $[q+1, 4, q-3]$ 
and weight enumerator 
\begin{eqnarray*}
1+ \frac{(q-4)(q-1)q(q+1)}{24}z^{q-3} +  \frac{(q-1)q(q+1)}{2} z^{q-2} + \frac{(q+1)q^2(q-1)}{4} z^{q-1} \\
+ 
\frac{(q-1)(q+1)(2q^2+q+6)}{6} z^q +  \frac{3q^4 - 4q^3 - 3q^2 + 4q}{8} z^{q+1}. 
\end{eqnarray*} 
Further,  the codewords of weight $4$ in $\C_{(q, q+1, 3,1)}$ support a $2$-$(q+1, 4, (q-4)/2)$ design, and 
the codewords of weight $q-3$ in the dual code $\C_{(q, q+1, 3,1)}^\perp$ support a $2$-$(q+1, q-3, \lambda^\perp)$ design 
with 
$$ 
\lambda^\perp=\frac{(q-4)^2(q-3)}{24}. 
$$ 
\end{theorem}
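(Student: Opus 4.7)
The plan is to mirror the structure of the proof of Theorem~\ref{thm-SQScode} but in characteristic~$2$, replacing the ``fourth-point-is-unique'' argument (which yielded a Steiner quadruple system there) by a direct pair-counting that produces a genuine $2$-design rather than a $3$-design.

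First I would note that since $s$ is even, $q+1=2^s+1\equiv 2\pmod{3}$, hence $3\nmid q+1$, so Theorem~\ref{thm-SQScode0} already gives $d\bigl(\C_{(q,q+1,3,1)}\bigr)\geq 4$. The same Vandermonde-style determinant identity used in the proof of Theorem~\ref{thm-SQScode} is valid in any characteristic and shows that a $4$-subset $\{x,y,z,w\}\subset U_{q+1}$ supports a weight-$4$ codeword iff $xy+xz+xw+yz+yw+zw=0$, and that each such $4$-subset then contributes exactly $q-1$ weight-$4$ codewords (its nonzero scalar multiples).

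The heart of the argument is the following pair-counting. Fix $\{u_1,u_2\}\subset U_{q+1}$ and let $u_4\in U_{q+1}\setminus\{u_1,u_2\}$. Because the symmetric-function condition is linear in $u_3$ in characteristic~$2$, one obtains
\[
u_3 \;=\; \frac{u_1u_2 + u_4(u_1+u_2)}{u_1+u_2+u_4}.
\]
A short calculation using $u^q=u^{-1}$ on $U_{q+1}$ shows $u_3\cdot u_3^q=1$, so $u_3\in U_{q+1}$ automatically. Moreover, the denominator cannot vanish for $u_4\in U_{q+1}$: if $u_1+u_2+u_4=0$ with $u_4\in U_{q+1}$, then $(u_1+u_2)^{q+1}=1$, which reduces to $(u_1/u_2)^2+(u_1/u_2)+1=0$ and would require a nontrivial cube root of unity in $U_{q+1}$, contradicting $3\nmid q+1$. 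Checking when $u_3$ collides with $u_1$, $u_2$, or $u_4$ yields exactly three forbidden values for $u_4$, namely $u_1^2/u_2$, $u_2^2/u_1$, and $\sqrt{u_1u_2}$ (all well-defined elements of $U_{q+1}$ because $q+1$ is odd), and these three are pairwise distinct and different from $u_1,u_2$ precisely because $3\nmid q+1$. Hence $q-4$ admissible ordered values of $u_4$ remain, producing $(q-4)/2$ unordered ``special'' $4$-subsets containing $\{u_1,u_2\}$. The count is independent of the pair, which simultaneously proves that the weight-$4$ codewords of $\C_{(q,q+1,3,1)}$ support a $2$-$(q+1,4,(q-4)/2)$ design and yields $A_4=(q-4)(q-1)q(q+1)/24$.

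To finish, I would repeat the NMDS argument from the proof of Theorem~\ref{thm-SQScode}: the explicit description $\bc_{(a,b)}=\bigl(\tr_{q^2/q}(a\gamma^i+b\gamma^{2i})\bigr)_{i=0}^{q}$ together with the at-most-four-roots bound for $bu^4+au^3+a^qu+b^q$ on $U_{q+1}$ forces $d^\perp\geq q-3$, and the MDS alternative is ruled out exactly as before, so both $\C_{(q,q+1,3,1)}$ and its dual are AMDS, hence NMDS. Theorem~\ref{thm-121FW} then gives $A_{q-3}^\perp=A_4$, and substituting into Theorem~\ref{thm-DLwtd} produces the claimed weight enumerator by routine algebra. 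Finally, the support-disjoint correspondence of Theorem~\ref{thm-121FW} identifies the weight-$(q-3)$ blocks of the dual with the complements of the weight-$4$ blocks of $\C_{(q,q+1,3,1)}$, so they form the complementary $2$-design; inclusion-exclusion applied to the just-established $2$-$(q+1,4,(q-4)/2)$ design gives $\lambda^\perp=b-2r+\lambda=(q-4)^2(q-3)/24$. The main obstacle I expect is the distinctness-and-membership check for the three forbidden values of $u_4$; both aspects reduce to $3\nmid q+1$, which is precisely why the hypothesis demands $s$ even.
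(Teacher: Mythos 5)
Your proposal is correct and follows essentially the same route as the paper's proof: the same determinant identity reduces weight-$4$ supports to $xy+xz+xw+yz+yw+zw=0$, the same count of five excluded elements gives $(q-4)/2$ completions of a fixed pair (hence the $2$-design and $A_4$), and the NMDS property, Theorem \ref{thm-121FW} and Theorem \ref{thm-DLwtd} yield the weight enumerator and the complementary design for the dual exactly as in the paper. Your explicit verification that the complementary design has $\lambda^\perp=b-2r+\lambda=(q-4)^2(q-3)/24$ is a small addition the paper leaves implicit, but not a different method.
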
 

\begin{proof} 
Recall that $q=2^s$ with $s \geq 4$. 
We follow the notation of the proof of Theorem \ref{thm-SQScode}.  
Let $x, y, z, w$ be four pairwise distinct elements 
in $U_{q+1}$. It can be verified that 
\begin{eqnarray}\label{eqn-tcm171}
\lefteqn{\left|
\begin{array}{llll} 
x^{-2} & y^{-2} & z^{-2}  & w^{-2} \\ 
x^{-1}  & y^{-1} & z^{-1} & w^{-1} \\ 
x & y & z & w \\  
x^{2} & y^{2} & z^{2}  & w^{2} 
\end{array} 
\right|  }  \nonumber \\
&& = \frac{(z-w)(y-w)(y-z)(x-w)(x-z)(x-y)}{(xyzw)^2} (xy+xz+xw+yz+yw+zw). \nonumber  \\ 
\end{eqnarray} 
Notice that $3$ does not divide $2^s+1$, as $s$ is even. 
It can be similarly proved that $\C_{(q, q+1, 3,1)}$ over $\gf(q)$ 
has parameters $[q+1, q-3, 4]$, and its dual code $\C_{(q, q+1, 3,1)}^\perp$ has parameters $[q+1, 4, q-3]$. 
Thus, they are NMDS. 

Similar to the proof of Theorem \ref{thm-SQScode}, one can prove that every codeword of weight 4 in  $\C_{(q, q+1, 3,1)}$ and its nonzero multiples correspond 
uniquely to a set $\{x,y,z,w\}$ of four pairwise distinct elements $x, y, z, w$ in $U_{q+1}$ such that the matrix 
$M(x,y,z,w)$ in  (\ref{eqn-tcm164} ) has rank $3$. 

Let $x, y$ be two distinct elements in $U_{q+1}$. We now consider the total number of choices of $z$ and $w$ in $U_{q+1}$ 
such the matrix $M(x,y,z,w)$  in  (\ref{eqn-tcm164} ) has rank $3$. Using (\ref{eqn-tcm171}) one can verify that $M(x,y,z,w)$ has rank $3$ if and only if 
\begin{eqnarray*} 
z \not\in \left\{  x, y, x^2y^{-1}, y^2x^{-1}, (xy)^{2^{2s-1}}   \right\} 
\end{eqnarray*} 
and 
\begin{eqnarray}\label{eqn-tcm172}
w=\frac{xy+yz+zx}{x+y+z}. 
\end{eqnarray} 
Note that the elements in 
$$ 
 \left\{  x, y, x^2y^{-1}, y^2x^{-1}, (xy)^{2^{2s-1}}   \right\} 
$$ 
are pairwise distinct. It can be verified that if $(z, w)$ is a choice, so is $(w, z)$. Thus, the total number of choices 
of $w$ and $z$ such the matrix $M(x,y,z,w)$  in  (\ref{eqn-tcm164} ) has rank $3$ is equal to 
$$ 
\frac{q+1-5}{2}=\frac{q-4}{2}. 
$$
Since this number is independent of the elements $x$ and $y$, the codewords of weight $4$ in $\C_{(q, q+1, 3,1)}$ support 
a $2$-$(q+1, 4, (q-4)/2)$ design. Consequently, 
$$ 
A_4 = \frac{(q-4)(q-1)q(q+1)}{24}. 
$$ 
It then follows from Theorem \ref{thm-121FW} that 
$$ 
A_{q-3}^\perp =A_4= \frac{(q-4)(q-1)q(q+1)}{24}.  
$$ 
Applying Theorem \ref{thm-DLwtd}, one obtains the desired weight enumerator of $\C_{(q, q+1, 3,1)}^\perp$. 

By Theorem \ref{thm-121FW}, the minimum weight codewords in $\C_{(q, q+1, 3,1)}^\perp$ support a 
$2$-design which is the complementary design of the design supported by all the minimum weight codewords 
in $\C_{(q, q+1, 3,1)}$. This completes the proof. 
\end{proof} 

With Theorem \ref{thm-DLwtd} and the expression of $A_4$, one can verify that $A_i>0$ for all $i$ with 
$4 \leq i \leq q+1$. Notice that $A_i^\perp >0$ for all $i$ with $q-3 \leq i \leq q+1$. The conditions in the 
Assmus-Mattson Theorem and the condition of Theorem \ref{thm-DLdesign} are not satisfied. But the 
codes still hold simple $2$-designs.    

We remark the code $\C_{(q, q+1, 3, 1)}$ supports simple $t$-designs for $t \geq 2$ only when $p=3$ 
or $p=2$ and $s$ is even. This makes this class of codes very special.  

When $s$ is odd and $q=2^s$, the code  $\C_{(q, q+1, 3,1)}$ is not near MDS, but its dual is still AMDS. 
We are not interested in  the code $\C_{(q, q+1, 3,1)}$ in this case. The parameters of this code and its dual 
in this case are given 
below. 

\begin{theorem}
Let $q=2^s$ with $s \geq 3$ being odd. Then the narrow-sense BCH code $\C_{(q, q+1, 3,1)}$ over $\gf(q)$ 
has parameters $[q+1, q-3, 3]$, and its dual code $\C_{(q, q+1, 3,1)}^\perp$ has parameters $[q+1, 4, q-3]$ 
\end{theorem}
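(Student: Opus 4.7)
The plan is to obtain this result as an easy consequence of the work already done for Theorem \ref{thm-SQScode0} and the dual-code analysis inside the proof of Theorem \ref{thm-SQScode}; no new computation is really needed.

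First I would observe the arithmetic fact that drives the whole statement. Since $2 \equiv -1 \pmod{3}$, we have $q+1 = 2^s+1 \equiv (-1)^s+1 \pmod 3$, which equals $0$ precisely when $s$ is odd. Thus, under the hypothesis $s \geq 3$ odd, $3$ divides $q+1$. Theorem \ref{thm-SQScode0} then applies verbatim, giving that $\C_{(q,q+1,3,1)}$ has length $q+1$, dimension $q-3$ (the generator polynomial $g_1(x) g_2(x)$ is a product of two distinct irreducible polynomials of degree $2$), and minimum distance exactly $3$ (the explicit weight-$3$ codeword supported on $\{0,(q+1)/3,2(q+1)/3\}$ constructed via the identity $1+\gamma^{(q+1)/3}+\gamma^{2(q+1)/3}=0$ is the witness). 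That settles the first half.

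For the dual, the dimension is automatically $(q+1) - (q-3) = 4$. To get the minimum distance, I would reuse the trace representation
\[
\C_{(q,q+1,3,1)}^\perp = \{\bc_{(a,b)}: a,b \in \gf(q^2)\}, \quad \bc_{(a,b)} = \bigl(\tr_{q^2/q}(a\gamma^i + b\gamma^{2i})\bigr)_{i=0}^{q},
\]
together with the factorization (already used in Theorem \ref{thm-SQScode})
\[
\tr_{q^2/q}(au + bu^2) = u^{-2}\bigl(b u^4 + a u^3 + a^q u + b^q\bigr) \quad \text{for } u \in U_{q+1}.
\]
For $(a,b) \neq (0,0)$ the polynomial $b u^4 + a u^3 + a^q u + b^q$ is a nonzero polynomial of degree at most $4$, and hence has at most $4$ roots in $U_{q+1}$. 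Consequently $\wt(\bc_{(a,b)}) \geq q+1-4 = q-3$, giving $d^\perp \geq q-3$.

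To upgrade this lower bound to an equality, I would rule out $d^\perp = q-2$ by the same MDS-duality argument as in the proof of Theorem \ref{thm-SQScode}: if $d^\perp = q-2$, then $\C_{(q,q+1,3,1)}^\perp$ meets the Singleton bound and is MDS, so its dual $\C_{(q,q+1,3,1)}$ would also be MDS of parameters $[q+1,q-3,5]$, contradicting the already-established $d = 3$. Therefore $d^\perp = q-3$, as claimed. Since no delicate counting or combinatorial identity is required, I do not anticipate any genuine obstacle; the only thing that really deserves attention is making sure the parity-of-$s$ hypothesis is invoked in the correct place so that Theorem \ref{thm-SQScode0} is applied in the $3 \mid q+1$ regime rather than the $3 \nmid q+1$ one.
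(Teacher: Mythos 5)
Your proposal is correct and follows essentially the paper's own route: the primal parameters come from the $3\mid q+1$ case already established (the weight-$3$ codeword built from $1+\gamma^{(q+1)/3}+\gamma^{2(q+1)/3}=0$, i.e.\ Theorem~\ref{thm-SQScode0}), and the dual's minimum distance is obtained exactly as the paper intends by its ``similarly proved'' remark, namely via the trace representation of $\C_{(q,q+1,3,1)}^\perp$ (at most four zeros of $bu^4+au^3+a^qu+b^q$ on $U_{q+1}$, so $d^\perp\geq q-3$) together with the Singleton/MDS-duality exclusion of $d^\perp=q-2$. No gaps; your only added value is spelling out the step the paper leaves implicit.
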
 

\begin{proof}
We follow the notation before. The dimensions of the two codes in this case follow from the proof of Theorem 
\ref{thm-SQScode2}.  
By the BCH bound, the minimum distance $d(\C_{(q, q+1, 3,1)}) \geq 3$. We now 
prove that the code has a codeword of weight $3$. Since $s$ is odd, $3$ divides $q+1$. It then follows that 
(\ref{eqn-19oct171}) 
 that $\C_{(q, q+1, 3,1)}$ has a codeword of weight $3$. Consequently,  $d(\C_{(q, q+1, 3,1)}) = 3$. 
The minimum distance of $\C_{(q, q+1, 3,1)}^\perp$ is similarly proved. 
\end{proof}

The proof of Theorem \ref{thm-SQScode2} also proved the following theorem. 

\begin{theorem}
Let $q=2^s$ with $s \geq 4$ being even. Let $\alpha$ be a generator of $\gf(q^2)^*$, and put $\gamma=\alpha^{-(q-1)}$. 
Define $U_{q+1}=\{1, \gamma, \gamma^2, \ldots, \gamma^q\}$ and 
$$ 
\cB=\left\{\{x,y,z,w\} \in \binom{U_{q+1}}{4}: xy+xz+xw+yz+yw+zw=0\right\}.  
$$  
Then $(U_{q+1}, \cB)$ is a $2$-$(q+1, 4, (q-4)/2)$ design, 
and is isomorphic to the $2$-design supported by the minimum weight codewords of 
the code $\C_{(q, q+1, 3,1)}$. 
\end{theorem}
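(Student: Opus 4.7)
The plan is to leverage the characterization of minimum-weight codewords of $\C_{(q,q+1,3,1)}$ that drives the proof of Theorem \ref{thm-SQScode2}. The coordinate positions of this cyclic code of length $q+1$ are naturally indexed by $\{0, 1, \ldots, q\}$ and the map $\phi : i \mapsto \gamma^i$ is a bijection from this index set onto $U_{q+1}$. My goal is to show that $\phi$ sends the block set of the support design $\bD_4(\C_{(q,q+1,3,1)})$ exactly to $\cB$, so that the two incidence structures are isomorphic, and then to invoke Theorem \ref{thm-SQScode2} for the parameters.

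The main step is to re-derive the determinantal criterion from the proof of Theorem \ref{thm-SQScode2}. A parity-check matrix of $\C_{(q,q+1,3,1)}$ is obtained by expanding the two rows $(\gamma^{i})_{i=0}^{q}$ and $(\gamma^{2i})_{i=0}^{q}$ over $\gf(q^2)$ into four rows over $\gf(q)$ via the Frobenius; using $\gamma^{-1}=\gamma^q$, the four Frobenius conjugates of the $i$-th column become the column $(x^{-2}, x^{-1}, x, x^{2})^T$ with $x = \gamma^i$. Hence a $4$-subset $\{i_1, i_2, i_3, i_4\}$ supports a nonzero codeword iff the matrix $M(x, y, z, w)$ of (\ref{eqn-tcm164}), with $x = \gamma^{i_1}$, $y = \gamma^{i_2}$, $z = \gamma^{i_3}$, $w = \gamma^{i_4}$, has rank strictly less than $4$. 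Since $d(\C_{(q,q+1,3,1)}) = 4$, this rank is exactly $3$, and $\det M(x, y, z, w) = 0$. Applying the factorization (\ref{eqn-tcm171}) and using that $x, y, z, w$ are pairwise distinct, the vanishing of the determinant is equivalent to $xy + xz + xw + yz + yw + zw = 0$, i.e.\ to the condition defining $\cB$.

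This yields the claimed isomorphism: $\phi$ is a bijection of point sets carrying blocks of $\bD_4(\C_{(q,q+1,3,1)})$ to elements of $\cB$ and vice versa. Since Theorem \ref{thm-SQScode2} asserts that $\bD_4(\C_{(q,q+1,3,1)})$ is a $2$-$(q+1, 4, (q-4)/2)$ design, so is $(U_{q+1}, \cB)$, and the two structures are isomorphic by construction.

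There is no substantive obstacle: all computational content, including the determinantal identity (\ref{eqn-tcm171}) and the block count, is already contained in the proof of Theorem \ref{thm-SQScode2}. The present theorem is essentially a repackaging of that proof, expressed intrinsically in $U_{q+1}$ without reference to coordinates. The only care needed is in verifying that pairwise distinctness of the $x_j\in U_{q+1}$ matches pairwise distinctness of the indices $i_j$ (immediate since $\gamma$ has order $q+1$) and that the Frobenius-expanded parity-check matrix indeed coincides with the matrix $M(x,y,z,w)$, which uses only $\gamma^{-i}=\gamma^{qi}$ and its square.
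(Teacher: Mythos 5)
Your proposal is correct and follows essentially the paper's own route: the paper simply observes that the proof of Theorem \ref{thm-SQScode2} already establishes this statement, and your argument unpacks exactly that correspondence (via the map $i \mapsto \gamma^i$, the matrix $M(x,y,z,w)$, and the factorization (\ref{eqn-tcm171})) before importing the parameters from Theorem \ref{thm-SQScode2}. The small point you gloss --- that a singular $M$ yields a kernel vector with entries in $\gf(q)$, not just $\gf(q^2)$ --- is treated at the same level of detail in the paper itself, so nothing substantive is missing.
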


\subsection{Subfield subcodes of the two families of near MDS codes} 

Let $\C$ be an $[n, \kappa, d]$ code over $\gf(q)$, where $q=r^h$ for some prime power $r$ and some positive integer $h$. 
The \emph{subfield subcode} of $\C$ over $\gf(r)$, denoted by $\C|_{\gf(r)}$, is defined by 
$$ 
\C|_{\gf(r)} = \{\bc \in \C: \bc \in \gf(r)^n\} = \C \cap \gf(r)^n. 
$$ 
It is well known that 
$$ 
\kappa \geq \dim(\C|_{\gf(r)}) \geq n-r(n-\kappa). 
$$ 
In this section, we provide information on the subfield subcodes of the two families of near MDS codes documented in 
Theorems \ref{thm-SQScode}  and \ref{thm-SQScode2}. 

Let $q=p^s$, where $p$ is a prime. We now consider the narrow-sense BCH code $\C_{(q, q+1, 3,1)}$ and its subfield subcode 
$\C_{(q, q+1, 3,1)}|_{\gf(p)}$. We follow the notation in the proof of Theorem \ref{thm-SQScode0}. By the  
Deslsarte Theorem, we have 
\begin{eqnarray}\label{eqn-1910151}
\C_{(q, q+1, 3,1)}|_{\gf(p)} = \left(  \tr_{q/p} \left( \C_{(q, q+1, 3,1)}^\perp \right) \right)^\perp. 
\end{eqnarray}    
By the proof of  Theorem \ref{thm-SQScode0}, 
\begin{eqnarray}\label{eqn-1910152}
  \C_{(q, q+1, 3,1)}^\perp 
 = \{ (\tr_{q^2/q} (a \gamma^i +b\gamma^{2i}))_{i=0}^q: a, \, b \in \gf(q^2) \}. 
\end{eqnarray} 
Combining (\ref{eqn-1910151}) and (\ref{eqn-1910152}) yields 
\begin{eqnarray*}
\C_{(q, q+1, 3,1)}|_{\gf(p)} = \left(\{ (\tr_{q^2/p} (a \gamma^i +b\gamma^{2i}))_{i=0}^q: a, \, b \in \gf(q^2) \} \right)^\perp. 
\end{eqnarray*} 
Again by the Delsarte Theorem, we obtain 
\begin{eqnarray}\label{eqn-1910153}
\C_{(q, q+1, 3,1)}|_{\gf(p)} = \C_{(p, q+1, 3,1)}. 
\end{eqnarray} 
This equality will be useful for deriving the parameters of the subfield subcode. 

\begin{theorem}\label{thm-1910155}
Let  $s \geq 4$ be an even integer. Then the binary subfield subcode $\C_{(2^s, 2^s+1, 3, 1)}|_{\gf(2)}$  
has parameters $[2^s+1, 2^s+1-2s, 5]$.  
\end{theorem}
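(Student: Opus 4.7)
The plan is to apply equation~(\ref{eqn-1910153}) so that $\C_{(2^s, 2^s+1, 3, 1)}|_{\gf(2)} = \C_{(2, 2^s+1, 3, 1)}$, and then read off the parameters of this narrow-sense binary BCH code of length $n = 2^s + 1$ directly. The length is immediate. For the dimension I would show that the $2$-cyclotomic coset $C_1 \pmod n$ has size $2s$: since $2 \in C_1$ one has $C_2 = C_1$, so the defining set is just $C_1$. As $2^s \equiv -1 \pmod n$, the order of $2$ modulo $n$ divides $2s$; and any proper divisor $d$ of $2s$ satisfies $d \leq s$ (because divisors of $2s$ exceeding $s$ must equal $2s$), whence $2^d - 1 < 2^s + 1 = n$ forces $n \nmid 2^d - 1$. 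Hence the order is exactly $2s$, $|C_1| = 2s$, and $\dim = n - 2s$.

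For the lower bound $d \geq 5$ I would show that no binary codeword of weight $3$ or $4$ exists. Since a binary $c$ satisfies $c(\beta^2) = c(\beta)^2$, membership in the code reduces to $c(\beta) = 0$; a weight-$w$ binary codeword thus corresponds to $w$ distinct elements $x_1, \ldots, x_w \in U_{q+1}$ (where $q = 2^s$) with $\sum x_j = 0$ in $\gf(q^2)$. The Frobenius $x \mapsto x^q = x^{-1}$ on $U_{q+1}$ then forces several elementary symmetric polynomials of the $x_j$'s to vanish simultaneously. For $w=3$: $\sigma_1 = 0$ implies $\sigma_2 = 0$ (apply Frobenius and clear denominators), so the three elements are cube roots of $\sigma_3$; but for $s$ even, $3 \nmid 2^s + 1$ means $U_{q+1}$ contains no primitive cube root of unity, so at most one cube root of $\sigma_3$ lies in $U_{q+1}$—contradicting distinctness. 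For $w=4$: $\sigma_1 = 0$ forces $\sigma_3 = 0$, so the four elements are roots of $t^4 + \sigma_2 t^2 + \sigma_4$, a polynomial quadratic in $t^2$; since squaring is bijective on $\gf(q^2)$ (as $|\gf(q^2)^*| = 2^{2s} - 1$ is odd), this has at most two distinct roots—again a contradiction.

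For the upper bound $d \leq 5$ I would exhibit five distinct elements of $U_{q+1}$ summing to $0$. When $s \equiv 2 \pmod 4$, one has $5 \mid 2^s + 1$, so $U_{q+1}$ contains all fifth roots of unity, whose sum is zero. When $s \equiv 0 \pmod 4$ this shortcut fails and I would use $5$-subsets of the form $\{1, x, x^{-1}, y, y^{-1}\}$: their sum equals $1 + c_x + c_y$ with $c_x = x + x^{-1} \in \gf(q)$, so one needs $c_x + c_y = 1$. A standard check shows $c \in \gf(q)^*$ arises as $x + x^{-1}$ for some $x \in U_{q+1}$ iff the quadratic $X^2 + cX + 1$ is irreducible over $\gf(q)$ iff $\tr_{q/2}(1/c) = 1$. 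The problem therefore reduces to showing $T \cap (T+1) \neq \emptyset$ for $T = \{c \in \gf(q)^* : \tr_{q/2}(1/c) = 1\}$, which for $s \geq 4$ even I would establish via a Kloosterman-type character-sum estimate, giving $|T \cap (T+1)| = (q-2)/4 + O(\sqrt{q}) > 0$.

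The main obstacle is the upper bound in the case $s \equiv 0 \pmod 4$: the fifth-roots-of-unity construction is unavailable, and rigorously producing a weight-$5$ codeword requires either an explicit algebraic construction or the character-sum/Weil-bound argument for $T \cap (T+1)$ tailored to $U_{q+1}$; the other steps (length, dimension, and lower bound on $d$) are direct applications of the cyclotomic-coset analysis and the Frobenius/Newton identities on $U_{q+1}$.
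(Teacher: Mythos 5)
Your identification of the subfield subcode with the binary narrow-sense BCH code $\C_{(2,2^s+1,3,1)}$ via (\ref{eqn-1910153}), and your cyclotomic-coset computation giving dimension $2^s+1-2s$, coincide with the paper's argument. Where you genuinely diverge is the minimum distance: the paper simply recognizes this code as the Zetterberg code and quotes the known result $d=5$ from the literature (Schoof--van der Vlugt and Xia et al.), whereas you prove $d=5$ from scratch. Your lower bound is correct and pleasantly elementary: a binary codeword of weight $w$ amounts to $w$ distinct elements of $U_{q+1}$ summing to zero (since $c(\beta^2)=c(\beta)^2$), the Frobenius $x\mapsto x^{-1}$ kills the complementary elementary symmetric functions, and for $w=3$ this forces a primitive cube root of unity in $U_{q+1}$ (impossible since $3\nmid 2^s+1$ for $s$ even, exactly the mechanism of Lemma \ref{lem-tcm102}), while for $w=4$ it forces four distinct roots of $t^4+\sigma_2t^2+\sigma_4$, impossible because squaring is injective in characteristic $2$. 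Your upper bound is clean for $s\equiv 2\pmod 4$ (fifth roots of unity lie in $U_{q+1}$ and sum to zero), but for $s\equiv 0\pmod 4$ it rests on the sketched claim that $T\cap(T+1)\neq\emptyset$ for $T=\{c\in\gf(q)^*:\tr_{q/2}(1/c)=1\}$; the reduction of $\{1,x,x^{-1},y,y^{-1}\}$ to this condition is right (including the criterion via irreducibility of $X^2+cX+1$), and the Kloosterman/Weil estimate does close it, but you should actually carry out the character-sum computation and check that the error term is beaten already at the smallest case $q=16$ (it is, with little room to spare), and note explicitly that $c\neq 0,1$ and $c_x\neq c_y$ guarantee the five elements are distinct. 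In short, your route is self-contained but longer and still has the analytic step to execute; the paper's route is a two-line citation that buys the exact value $d=5$ immediately at the cost of relying on nontrivial external results about the Zetterberg code.
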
 

\begin{proof}
Let $q=2^s$ and $n=q+1=2^s+1$. We follow the notation in the proofs of Theorems \ref{thm-SQScode0} and \ref{thm-SQScode2}.  
Note that the $2$-cyclotomic coset $C_1$ modulo $n$ is given by 
$$ 
C_1=\{1,2, \ldots, 2^{s-1}, -1, -2, \ldots, -2^{s-1}\} \bmod n. 
$$ 
By definition, the minimal polynomial $\M_{\beta}(x)$ of $\beta$ over $\gf(2)$ is given by 
$$ 
\M_{\beta}(x)=\sum_{i \in C_1} (x - \beta^i). 
$$
By the definition of BCH codes, $\C_{(2, 2^s+1, 3,1)}$ has generator polynomial $\M_{\beta}(x)$ with degree $2s$, 
and is the Zetterberg code. It is known that this code has minimum distance $5$ \cite{SV91,Xiaetal}. 
\end{proof} 

Using the sphere packing bound, one can verify that the subfield subcode $\C_{(2^s, 2^s+1, 3, 1)}|_{\gf(2)}$ 
is dimension-optimal. In addition,  this  binary code is also distance-optimal when $s \in \{2,4, 6, 8\}$ according to 
\cite{Grassl}. This makes the original code  $\C_{(2^s, 2^s+1, 3, 1)}$ very interesting. 

We inform the reader that $d(\C_{(2^s, 2^s+1, 3, 1)}|_{\gf(2)}) = 3$ if $s$ is odd, which follows from (\ref{eqn-19oct171}). 
This is why we are not interested in this code for the case $s$ being odd.

\begin{theorem}
Let $s \geq 4$ be an even integer. Then the code $(\C_{(2^s, 2^s+1, 3,1)}|_{\gf(2)})^\perp$ has parameters 
$[2^s+1, 2s, 2^{s-1}-2^{s/2}+2]$. 
\end{theorem}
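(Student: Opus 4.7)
The plan is to derive the length and dimension directly from Theorem~\ref{thm-1910155}, and then to reduce the minimum-weight question for the dual to a classical evaluation of binary Kloosterman sums. By Theorem~\ref{thm-1910155} and \eqref{eqn-1910153} the subfield subcode $\C_{(2^s,2^s+1,3,1)}|_{\gf(2)}=\C_{(2,2^s+1,3,1)}$ has parameters $[2^s+1,\,2^s+1-2s,\,5]$, so its dual automatically has length $2^s+1$ and dimension $2s$. The whole argument therefore focuses on computing the minimum weight of that dual.

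First I would exhibit the dual as an irreducible cyclic code. The defining set of $\C_{(2,2^s+1,3,1)}$ is the single $2$-cyclotomic coset $C_1=\{\pm 2^i\bmod(q+1):0\le i\le s-1\}$, where $q=2^s$; it has size $2s$ and is stable under negation, so the dual has irreducible parity-check polynomial $\M_\beta(x)$ for a primitive $(q+1)$-th root of unity $\beta\in\gf(q^2)$. By the Delsarte theorem its codewords are
\[
c_a=\bigl(\tr_{q^2/2}(a\gamma^i)\bigr)_{i=0}^{q},\qquad a\in\gf(q^2),
\]
where $\gamma$ is a primitive $(q+1)$-th root of unity. For $a\neq 0$ the standard identity $|\{u\in U_{q+1}:\tr_{q^2/2}(au)=0\}|=\tfrac12(q+1+S(a))$ with $S(a)=\sum_{u\in U_{q+1}}(-1)^{\tr_{q^2/2}(au)}$ yields
\[
w(a)=\frac{q+1-S(a)}{2}.
\]

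Next I would reduce $S(a)$ to a binary Kloosterman sum. Exploiting $u^q=u^{-1}$ on $U_{q+1}$ and the substitution $\tau=au$, the sum $S(a)$ becomes a sum of $(-1)^{\tr_{q^2/2}(\tau)}$ over the norm fibre $\{\tau\in\gf(q^2)^*:\tau\tau^q=\alpha\}$, where $\alpha:=a^{q+1}\in\gf(q)^*$. Parameterising such $\tau$ by $t=\tr_{q^2/q}(\tau)\in\gf(q)$, so that $\tau$ is a root of $Z^2+tZ+\alpha\in\gf(q)[Z]$, a short case analysis---the unique contribution $\tau=\sqrt\alpha$ when $t=0$, and two conjugate roots in $\gf(q^2)\setminus\gf(q)$ contributing $2(-1)^{\tr_{q/2}(t)}$ exactly when $t\neq 0$ and $\tr_{q/2}(\alpha/t^2)=1$---together with $\mathbf{1}[\tr(x)=1]=\tfrac12(1-(-1)^{\tr(x)})$ and the identity $\tr_{q/2}(x^2)=\tr_{q/2}(x)$, collapses $S(a)$ to
\[
S(a)=-\sum_{t\in\gf(q)^*}(-1)^{\tr_{q/2}(t+\sqrt\alpha/t)}=-K(c),
\]
where $c=\sqrt\alpha=a^{(q+1)/2}\in\gf(q)^*$ and $K(c)=\sum_{x\in\gf(q)^*}(-1)^{\tr_{q/2}(x+c/x)}$ is the binary Kloosterman sum. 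Hence $w(a)=\tfrac12(q+1+K(c))$, and $c$ ranges over all of $\gf(q)^*$ as $a$ ranges over $\gf(q^2)^*$ (squaring being a bijection on $\gf(q)$).

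Finally I would invoke the theorem of Lachaud and Wolfmann, which states that for $q=2^m$ with $m\geq 3$ the set $\{K(c):c\in\gf(q)^*\}$ coincides with $\{t\in\Z:|t|\leq 2\sqrt q,\ t\equiv -1\pmod 4\}$. Since $s\geq 4$ is even, $2\sqrt q=2^{s/2+1}$ is a multiple of $4$, so the minimum element of that set is $3-2\sqrt q$. Substituting back gives
\[
d^\perp=\frac{q+1+(3-2\sqrt q)}{2}=2^{s-1}-2^{s/2}+2,
\]
as required. The main obstacle is the Kloosterman reduction in the middle step---controlling which trace values $t$ correspond to genuine elements of the norm fibre, and then recognising the resulting one-variable sum as the standard binary Kloosterman sum---after which the numerical conclusion is immediate from the Lachaud--Wolfmann value set.
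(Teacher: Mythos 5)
Your proposal is correct, and it reaches the stated parameters by a more self-contained route than the paper. The paper's proof is essentially a single citation: it observes that $(\C_{(2^s, 2^s+1, 3,1)}|_{\gf(2)})^\perp$ is the dual of the Zetterberg code and quotes \cite[Theorem 6.6]{LW90} for its parameters. You instead re-derive the relevant content of that cited result: you identify the dual explicitly as the irreducible cyclic code $\{(\tr_{q^2/2}(a\gamma^i))_{i=0}^q : a \in \gf(q^2)\}$ (legitimate, since the defining coset $C_1$ has size $2s$ and is closed under negation), convert weights into character sums over $U_{q+1}$, and carry out the norm-fibre analysis that collapses these sums to binary Kloosterman sums, so the only external input you need is the Lachaud--Wolfmann value-set theorem for $K(c)$ from the same paper \cite{LW90}, rather than their weight-distribution theorem for the dual Zetterberg code. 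Your fibre analysis checks out: the $t=0$ element $\sqrt{\alpha}$ contributes $+1$, fibre elements with $t\neq 0$ occur exactly when $Z^2+tZ+\alpha$ is irreducible over $\gf(q)$, and the indicator trick with $\tr_{q/2}(x^2)=\tr_{q/2}(x)$ gives $S(a)=-K(\sqrt{\alpha})$; the map $a\mapsto \sqrt{a^{q+1}}$ is onto $\gf(q)^*$; and with your normalization (sum over $x\neq 0$, values $\equiv -1 \pmod 4$ of absolute value at most $2\sqrt{q}$) the minimum attained value is indeed $3-2\sqrt{q}$ when $s\ge 4$ is even, yielding $d^\perp = 2^{s-1}-2^{s/2}+2$. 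What each approach buys: the paper's citation is a one-liner, while your argument makes the weight--Kloosterman correspondence explicit (and would give the full weight distribution, not just $d^\perp$), though it still rests on the deep value-set result, so neither route is elementary. One cosmetic slip: $c=a^{(q+1)/2}$ is not a well-defined power since $q+1$ is odd; write $c=\sqrt{a^{q+1}}=(a^{q+1})^{2^{s-1}}$ instead.
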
 

\begin{proof}
$(\C_{(2^s, 2^s+1, 3,1)}|_{\gf(2)})^\perp$ is the dual of the Zetterberg code whose parameters are from 
\cite[Theorem 6.6]{LW90}. 
\end{proof}

When $s=2$, $(\C_{(2^s, 2^s+1, 3,1)}|_{\gf(2)})^\perp$ has parameters $[5, 4, 2]$, and is MDS. 
When $s=4$, the code has parameters $[17, 8, 6]$, and is distance-optimal \cite{Grassl}.  
When $s=4$, the code has parameters $[65, 12, 26]$, and has the best known 
parameters  \cite{Grassl} and is an optimal cyclic code \cite[Appendix A]{Dingbook15}. Thus, the code  
$(\C_{(2^s, 2^s+1, 3,1)}|_{\gf(2)})^\perp$ is very interesting.

\begin{theorem}
Let  $s \geq 2$. Then the code $\C_{(3^s, 3^s+1, 3, 1)}|_{\gf(3)}$  
has parameters $[3^s+1, 3^s+1-4s, d \geq 4]$. 
\end{theorem}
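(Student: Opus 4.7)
The plan is to identify $\C_{(3^s,\,3^s+1,\,3,\,1)}|_{\gf(3)}$ with a concrete narrow-sense ternary BCH code and then extract its parameters from the two relevant $3$-cyclotomic cosets. Specializing the identity~\eqref{eqn-1910153} to $p=3$ and $q=3^s$ gives
$$
\C_{(3^s,\,3^s+1,\,3,\,1)}|_{\gf(3)} \;=\; \C_{(3,\,3^s+1,\,3,\,1)},
$$
a ternary cyclic code of length $n=3^s+1$ whose generator polynomial is $\M_{\beta}(x)\,\M_{\beta^{2}}(x)$, where $\beta$ is a primitive $n$th root of unity in an extension of $\gf(3)$. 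The length is immediate, so what remains is to compute $|C_1|$, $|C_2|$, to verify $C_1\neq C_2$, and to lower-bound the minimum distance.

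To compute the coset sizes, I would first observe that $\ord_n(3)=2s$: since $3^s\equiv -1\pmod n$ one has $3^{2s}\equiv 1\pmod n$, while for any $1\le m\le s$ the inequality $3^m\le 3^s<n$ forces $3^m\bmod n=3^m\neq 1$, so no smaller positive exponent works. This gives $|C_1|=2s$ and, because $\gcd(2,n)=1$, the same argument yields $|C_2|=2s$. The main technical obstacle is to show $C_1\cap C_2=\emptyset$, or equivalently that $2\notin C_1$. I would rule out a hypothetical $3^i\equiv 2\pmod n$ for $0\le i\le 2s-1$ by splitting on the range of $i$ and using $3^s\equiv -1\pmod n$: for $0\le i\le s-1$ one has $3^i<n$ and $3^i\neq 2$; for $i=s$ the congruence would force $n\mid 3$, impossible for $s\ge 2$; for $s+1\le i\le 2s-1$ the congruence rewrites as $n\mid 3^{i-s}+2$, contradicted by $0<3^{i-s}+2\le 3^{s-1}+2<n$. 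Hence $\deg(\M_{\beta}\M_{\beta^{2}})=4s$, and the dimension equals $3^s+1-4s$.

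The minimum-distance bound $d\ge 4$ then comes for free. Every codeword of the subfield subcode lies in $\gf(3)^n\subseteq \gf(3^s)^n$ and is simultaneously a codeword of the ambient code $\C_{(3^s,\,3^s+1,\,3,\,1)}$, with Hamming weight unchanged by the inclusion. Theorem~\ref{thm-SQScode} gives that the ambient code has minimum distance exactly $4$, so every nonzero codeword of the subfield subcode has weight at least $4$. Combining these facts yields the claimed parameters $[3^s+1,\,3^s+1-4s,\,d\ge 4]$, and the whole argument reduces to the coset-disjointness step, which is where the work really lives.
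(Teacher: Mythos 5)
Your route is the same as the paper's: identify the subfield subcode with the ternary BCH code $\C_{(3,\,3^s+1,\,3,\,1)}$ via \eqref{eqn-1910153}, obtain the dimension from the degree $|C_1|+|C_2|$ of the generator polynomial $\M_{\beta}(x)\M_{\beta^2}(x)$, and get $d\geq 4$ by viewing codewords of the subcode inside the ambient code of minimum distance $4$ (Theorem \ref{thm-SQScode}); your range-splitting verification that $2\notin C_1$ is a correct filling-in of what the paper dismisses as ``easily verified.'' However, one step as written is wrong: you justify $|C_2|=2s$ ``because $\gcd(2,n)=1$,'' but $n=3^s+1$ is even, so $\gcd(2,n)=2$ and multiplication by $2$ is not invertible modulo $n$. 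Hence $|C_2|$ is not automatically $\ord_n(3)$: it is the least $\ell>0$ with $n\mid 2(3^\ell-1)$, i.e. $\ord_{n/2}(3)$, which a priori could be a proper divisor of $2s$. (A smaller cosmetic point: your claim $\ord_n(3)=2s$ checks only exponents $m\leq s$; this suffices because the order divides $2s$ and every proper divisor of $2s$ is at most $s$, but that reduction deserves a word.)

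The gap is real but easily repaired in exactly the style you already use. Since $|C_2|$ divides $\ord_n(3)=2s$ (the orbit of $2$ under multiplication by $3$ has period dividing $2s$), it suffices to exclude $1\leq \ell\leq s$ as a period: for $1\leq \ell\leq s-1$ one has $2\cdot 3^\ell<3^s<n$, so $2\cdot 3^\ell \bmod n=2\cdot 3^\ell\neq 2$; and for $\ell=s$, $2\cdot 3^s\equiv -2\pmod{n}$, and $-2\equiv 2$ would force $n\mid 4$, impossible for $s\geq 2$. Thus $|C_2|=2s$ after all, and with this patch your argument is complete and coincides with the paper's proof, with more detail than the paper provides at the coset-size and disjointness step.
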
 

\begin{proof}
Let $q=3^s$ and $n=q+1=3^s+1$. We follow the notation in the proofs of Theorems \ref{thm-SQScode0} and \ref{thm-SQScode}.  
Note that the $3$-cyclotomic coset $C_1$ modulo $n$ is given by 
$$ 
C_1=\{1,3, \ldots, 3^{s-1}, -1, -3, \ldots, -3^{s-1}\} \bmod n. 
$$ 
Similarly, the $3$-cyclotomic coset $C_2$ modulo $n$ is given by $C_2=2C_1 \bmod n$. It is easily verified 
that $C_1 \cap C_2 = \emptyset$ and $|C_1|=|C_2|=2s$.

By definition, the minimal polynomial $\M_{\beta^j}(x)$ of $\beta^j$ over $\gf(3)$ is given by 
$$ 
\M_{\beta^j}(x)=\sum_{i \in C_j} (x - \beta^i) 
$$
for $j \in \{1,2\}$. 
By the definition of BCH codes, $\C_{(3, 3^s+1, 3,1)}$ has generator polynomial $\M_{\beta}(x)\M_{\beta^2}(x)$. 
Therefore, the dimension of the code $\C_{(3, 3^s+1, 3,1)}$ is given by 
$$ 
\dim(\C_{(3, 3^s+1, 3,1)})=3^s+1-4s. 
$$  
By (\ref{eqn-1910153}), $\C_{(3^s, 3^s+1, 3, 1)}|_{\gf(3)}$ has the same dimension and generator polynomial as 
$\C_{(3, 3^s+1, 3,1)}$. 

By Theorem \ref{thm-SQScode}, $\C_{(3^s, 3^s+1, 3,1)}$ has minimum distance $4$. It then follows from the definition 
of subfield subcodes that the minimum distance $d(\C_{(3^s, 3^s+1, 3, 1)}|_{\gf(3)}) \geq 4$. This completes the proof. 
\end{proof} 

We inform the reader that the minimum distance of $\C_{(3^s, 3^s+1, 3, 1)}|_{\gf(3)})$ is indeed $4$ when $s=3$.  
We have the following examples of the code $\C_{(3^s, 3^s+1, 3, 1)}|_{\gf(3)}$: 
\begin{eqnarray*}
\begin{array}{ccc}
s  & \C_{(q, q+1, 3,1)}|_{\gf(3)} & (\C_{(q, q+1, 3,1)}|_{\gf(3)})^\perp \\
2 & [10, 2, 5] & [10, 8, 2] \\
3 & [28, 16, 4] & [28, 12, 8] \\
4 & [82, 66, 6] & [82, 16, 36]  
\end{array}
\end{eqnarray*}
$\C_{(q, q+1, 3,1)}|_{\gf(3)}$ and $(\C_{(q, q+1, 3,1)}|_{\gf(3)})^\perp$  both are distance-optimal 
cyclic codes when $s=2$ and $s=3$ according to \cite[Appendix A]{Dingbook15}.   
The distance optimality of these subfield subcodes make the original codes $\C_{(q, q+1, 3,1)}$ 
and $\C_{(q, q+1, 3,1)}^\perp$ very interesting. 

It would be worthy to settle the minimum distances of  $\C_{(q, q+1, 3,1)}|_{\gf(3)}$ and 
$\C_{(q, q+1, 3,1)}|_{\gf(3)})^\perp$.  The reader is invited to attack this open problem.

\section{Summary and concluding remarks}

This paper settled a 70-year-old open problem by presenting an infinite family of near MDS codes over 
$\gf(3^s)$ holding an infinite family of $3$-designs and an infinite family of near MDS codes over $\gf(2^{2s})$ 
holding an infinite family of $2$-designs. Hence, these codes have nice applications in combinatorics. 
The two families of near MDS codes are very interesting in coding theory, as their 
ternary and binary subfield subcodes are dimension-optimal or distance-optimal cyclic codes. 
It would be nice if the automorphism groups of the linear codes could be determined.    

An interesting open problem is whether there exists an infinite family of linear codes holding the spherical design 
$S(3, 1+q, 1+q^m)$ for arbitrary prime power $q$ and $m \geq 3$. This problem was settled only for the special 
case $q=3$ in this paper.  Another open problem is whether there exists an infinite family of near MDS codes holding an 
infinite family of $4$-designs. Quasi-cyclic NMDS codes may be such codes \cite{TD13}.

%\section*{Acknowledgments}

%C. Ding's research was supported by the Hong Kong Research Grants Council,
%Proj. No. 16300418. C. Tang was supported by National Natural Science Foundation of China (Grant No.
%11871058) and China West Normal University (14E013, CXTD2014-4 and the Meritocracy Research
%Funds)

\end{document}